\numberwithin{equation}{section}
\newtheorem{Theorem}{Theorem}[section]
\newtheorem{Lemma}[Theorem]{Lemma}
 { \theoremstyle{definition}
\newtheorem{Remark}[Theorem]{Remark} }
\newcommand{\tp}[1]{\,{\vphantom{#1}}^\mathrm{t}\!\,#1}
\newcommand{\CC}{\mathbb{C}}
\newcommand{\ZZ}{\mathbb{Z}}
\newcommand{\calP}{\mathcal{P}}
\newcommand{\bst}{\boldsymbol{t}}
\newcommand{\bsx}{\boldsymbol{x}}
\newcommand{\bsy}{\boldsymbol{y}}
\begin{document}

\allowdisplaybreaks

\newcommand{\arXivNumber}{1609.00882}

\renewcommand{\thefootnote}{}

\renewcommand{\PaperNumber}{009}

\FirstPageHeading

\ShortArticleName{$q$-Dif\/ference Kac--Schwarz Operators in Topological String Theory}

\ArticleName{$\boldsymbol{q}$-Dif\/ference Kac--Schwarz Operators\\ in Topological String Theory\footnote{This paper is a~contribution to the Special Issue on Combinatorics of Moduli Spaces: Integrability, Cohomo\-logy, Quantisation, and Beyond. The full collection is available at \href{http://www.emis.de/journals/SIGMA/moduli-spaces-2016.html}{http://www.emis.de/journals/SIGMA/moduli-spaces-2016.html}}}

\Author{Kanehisa TAKASAKI~$^\dag$ and Toshio NAKATSU~$^\ddag$}

\AuthorNameForHeading{K.~Takasaki and T.~Nakatsu}

\Address{$^\dag$~Department of Mathematics, Kinki University,\\
\hphantom{$^\dag$}~3-4-1 Kowakae, Higashi-Osaka, Osaka 577-8502, Japan}
\EmailD{\href{mailto:takasaki@math.h.kyoto-u.ac.jp}{takasaki@math.h.kyoto-u.ac.jp}}

\Address{$^\ddag$~Institute of Fundamental Sciences, Setsunan University,\\
\hphantom{$^\ddag$}~17-8 Ikeda Nakamachi, Neyagawa, Osaka 572-8508, Japan}
\EmailD{\href{mailto:nakatsu@mpg.setsunan.ac.jp}{nakatsu@mpg.setsunan.ac.jp}}

\ArticleDates{Received September 08, 2016, in f\/inal form February 17, 2017; Published online February 21, 2017}

\Abstract{The perspective of Kac--Schwarz operators is introduced to the authors' previous work on the quantum mirror curves of topological string theory in strip geometry and closed topological vertex. Open string amplitudes on each leg of the web diagram of such geometry can be packed into a multi-variate generating function. This generating function turns out to be a tau function of the KP hierarchy. The tau function has a fermionic expression, from which one f\/inds a vector $|W\rangle$ in the fermionic Fock space that represents a point~$W$ of the Sato Grassmannian. $|W\rangle$ is generated from the vacuum vector $|0\rangle$ by an operator $g$ on the Fock space. $g$~determines an operator $G$ on the space $V = \mathbb{C}((x))$ of Laurent series in which~$W$ is realized as a linear subspace. $G$~generates an admissible basis $\{\Phi_j(x)\}_{j=0}^\infty$ of~$W$. $q$-dif\/ference analogues~$A$,~$B$ of Kac--Schwarz operators are def\/ined with the aid of~$G$. $\Phi_j(x)$'s satisfy the linear equations $A\Phi_j(x) = q^j\Phi_j(x)$, $B\Phi_j(x) = \Phi_{j+1}(x)$. The lowest equation $A\Phi_0(x) = \Phi_0(x)$ reproduces the quantum mirror curve in the authors' previous work.}

\Keywords{topological vertex; mirror symmetry; quantum curve; $q$-dif\/ference equation; KP hierarchy; Kac--Schwarz operator}

\Classification{37K10; 39A13; 81T30}

\renewcommand{\thefootnote}{\arabic{footnote}}
\setcounter{footnote}{0}

\section{Introduction}

It is argued \cite{ADKMV03,DHSV08,DHS09,Hollands09} that a ``quantum mirror curve'' emerges on the B-model (Kodaira--Spencer theory) side of A-model topological string theory when a brane is inserted as a probe. This quantum curve is represented by a~quantum mechanical equation of the form
\begin{gather}
 \hat{H}(\hat{x},\hat{y})\Psi = 0. \label{qmc}
\end{gather}
$\hat{H}(\hat{x},\hat{y})$ is quantization of a function $H(x,y)$ on a 2D phase space,
and the curve
\begin{gather*}
 H(x,y) = 0
\end{gather*}
is called the mirror curve. In the case of local toric Calabi--Yau threefolds, $H(x,y)$ is the function with which the mirror manifold is def\/ined by the equation
\begin{gather*}
 zw = H(x,y).
\end{gather*}
The phase space therein has the logarithmic symplectic form $d\log x\wedge d\log y$, and one can choose~$\hat{x}$ and~$\hat{y}$ as
\begin{gather*}
 \hat{x} = x,\qquad \hat{y} = q^{D},
\end{gather*}
where
\begin{gather*}
 q = e^{-\hbar}, \qquad D = x\frac{d}{dx}.
\end{gather*}
(\ref{qmc}) thereby becomes a $q$-dif\/ference equation for the wave function $\Psi = \Psi(x)$ of the probe.

It is also pointed out \cite{GS11} that quantum curves are closely related to the Eynard--Orantin topological recursion~\cite{EO07}. The notion of topological recursion was introduced to topological string theory through a~matrix model description \cite{DV02,Marino06}. This new computational tool led to the BKMS remodeling conjecture \cite{BKMS07} as well as a new conjectural recursion formula for simple Hurwitz numbers \cite{BM07} (both of which were solved by mathematicians \cite{EMS09,FLZ1310,MZ10}). The mirror curve and a 2-point kernel $B(p,q)$ on this curve are the inputs of topological recursion. The outputs are a set of multi-point kernels $W_{g,n}$, $g \ge 0$, $n \ge 1$, that amount to multi-point resolvents of a matrix model. Amplitudes (or partition functions)
of strings are constructed from these multi-point kernels. Gukov and Su{\l}kowski's observation~\cite{GS11} (based on Eynard and Orantin's interpretation
of the notion of Baker--Akhiezer functions~\cite{EO07}) is that a WKB expansion of the wave function~$\Psi(x)$, too, can be constructed from $W_{g,n}$'s.
Gukov and Su{\l}kowski show, along with many other examples, a partial result on computation of the WKB expansion in the case of $\CC^3$ and the resolved conifold. A complete derivation of the quantum mirror curve for these two cases is achieved by Zhou~\cite{Zhou1207}.

We are interested in deriving the quantum mirror curve from the A-model side, namely, in the topological vertex formalism~\cite{AKMV03}. The wave function $\Psi(x)$ is def\/ined as a generating function of open string amplitudes. Topological vertex of\/fers a combinatorial description of these amplitudes.
If one can compute this generating function explicitly, one should be able to verify the equation~(\ref{qmc}) of the quantum mirror curve by any means.
This is indeed done for~$\CC^3$~\cite{ADKMV03,AKMV03} and the resolved conifold~\cite{HY06,KP06,KP08}. The wave functions for these cases are quantum dilogarithmic functions \cite{FK93,FV93} and $q$-hypergeometric (or basic hypergeometric) functions~\cite{GR-book}, hence satisfy equations of the form~(\ref{qmc}). Some other examples of computations, all falling into ``strip geometry'' \cite{IKP04}, can be found in the literature on dif\/ferent subjects \cite{BTZ11,DGH10,KPW10,Taki10}. A~unif\/ied expression of the quantum mirror curves for general strip geometry is presented in our previous work~\cite{Takasaki13}. Moreover, we extended these results to the simplest example of ``of\/f-strip geometry'' called closed topological vertex~\cite{TN15}.

The aim of this paper is to re-examine our previous results \cite{Takasaki13,TN15} in the perspective of ``Kac--Schwarz operators''. This enables us to highlight a role of the Sato Grassmannian and the KP hierarchy~\cite{SS83,SW85} that was hidden in the previous derivation of quantum mirror curves of topological string theory. Moreover, Kac--Schwarz operators for topological string theory exhibit some new features, which are signif\/icant in the context of integrable hierarchies.

The notion of Kac--Schwarz operators \cite{KS91,Schwarz91} originates in the string equations of 2D quantum gravity \cite{DVV91,Douglas90,FKN91,Moore90}
(see also the recent paper \cite{Schwarz1401} focused on quantum curves). The Kac--Schwarz operators for the $(a,b)$ minimal model of 2D quantum gravity are a pair of ordinary dif\/ferential operators $A$, $B$ of the form
\begin{gather*}
 A = \frac{1}{bz^{b-1}}\frac{d}{dz} + z^a + \text{terms of lower degrees},\qquad B = z^b
\end{gather*}
that satisfy the canonical commutation relation
\begin{gather*}
 [A,B] = 1.
\end{gather*}
They characterize a point $W$ of the Sato Grassmannian as a linear subspace of the space \mbox{$V = \CC((z^{-1}))$} of formal Laurent series by the condition
\begin{gather*}
 AW \subset W,\qquad BW \subset W.
\end{gather*}
The Witten--Kontsevich theory of intersection numbers on a moduli space of curves \cite{Kontsevich92,Witten91}, which amounts to the special case where $(a,b) = (1,2)$, can be treated in the same way.

A variant of this conventional formulation of Kac--Schwarz operators is used to derive quantum spectral curves of various Hurwitz numbers~\cite{Alexandrov1404,ALS1512} and quantum mirror curves of a few cases of topological string theory~\cite{Zhou1512}. The Kac--Schwarz operators therein are dif\/ferential operators of inf\/inite order (in the Hurwitz case) or $q$-dif\/ference operators (in the topological string case), and satisfy the modif\/ied commutation relation
\begin{gather*}
 [A,B] = A \label{[A,B]=B}
\end{gather*}
in the Hurwitz case and
\begin{gather}
 AB = qBA \label{AB=qBA}
\end{gather}
in the topological string case. Moreover, it is pointed out that a particular element $\Psi$ of $W$ (more precisely, the f\/irst element of an admissible basis of~$W$) satisf\/ies the equation
\begin{gather*}
 A\Psi = 0
\end{gather*}
in the Hurwitz case and
\begin{gather*}
 A\Psi = \Psi
\end{gather*}
in the topological string case. These equations are interpreted as quantum spectral/mirror curves.

To capture the quantum mirror curves of general strip geometry and closed topological vertex along these lines, we borrow the idea of {\it generating operator} developed by Alexandrov et al.~\cite{Alexandrov1404,ALS1512}. This is an invertible operator $G$ on $V = \CC((x))$, $x = z^{-1}$, that generates an admissible basis $\{\Phi_j(x)\}_{j=0}^\infty$ of the relevant point $W$ of the Sato Grassmannian. The Kac--Schwarz operators in this case are $q$-dif\/ference operators,
and def\/ined as
\begin{gather}
 A = G\cdot q^{-D}\cdot G^{-1},\qquad B = G\cdot x^{-1}\cdot G^{-1}. \label{ABG-rel}
\end{gather}
The basis functions $\Phi_j(x)$ satisfy the linear equations
\begin{gather}
 A\Phi_j(x) = q^j\Phi_j(x),\qquad B\Phi_j(x) = \Phi_{j+1}(x).
 \label{ABPhij-eq}
\end{gather}
Since $\Phi_0(x)$ turns out to be the generating function of open string amplitudes, the lowest equation $A\Phi_0(x) = \Phi_0(x)$ of (\ref{ABPhij-eq}) can be identif\/ied with the quantum mirror curve of topological string theory. As the algebraic relation (\ref{AB=qBA}) implies, the other equations $A\Phi_j(x) = q^j\Phi_j(x)$ can be derived from the lowest equation by applying $B$ iteratively. Thus $B$ plays the role of a~``ladder operator'' connecting these equations, though we do not know what physical meaning this tower of extra linear equations have.

We derive the generating operator $G$ from a fermionic expression of the KP tau function that is def\/ined as a multi-variate generating function of open string amplitudes. Our computational technique developed in the previous work~\cite{Takasaki13,TN15} yield a fermionic expression of the tau function
almost automatically. This expression contains the vector~$|W\rangle$ of the fermionic Fock space representing~$W$. $|W\rangle$~is generated from
the vacuum vector $|0\rangle$ by an operator~$g$ on the Fock space. In the case of topological string theory of strip geometry and closed topological vertex,
$g$ is made from building blocks that are familiar in the operator formalism of topological vertex \cite{ORV03, BY08}. We can thereby f\/ind an explicit expression of~$G$. This enables us to f\/ind the admissible basis and the $q$-dif\/ference Kac--Schwarz operators as well.

Viewing this machinery on the whole, one may say that the KP hierarchy plays the role of a ``mirror map''. The inputs of this map are open string amplitudes
in the topological vertex formalism. They are manufactured into a KP tau function. This tau function determines a~point~$W$ of the Sato Grassmannian, hence a vector $|W\rangle$ of the fermionic Fock space. From an admissible basis of $W$ and associated Kac--Schwarz operators, one obtains a quantum curve as the output. Such a mapping from the A-model side to the B-model side is commonly called a~mirror map.

This paper is organized as follows. In Section~\ref{section2}, the notion of generating operators and Kac--Schwarz operators are introduced and illustrated
for topological string theory on~$\CC^3$ and the resolved conifold. These two cases are presented as a prototype of the subsequent case studies. Relation to ``hypergeometric tau functions'' is brief\/ly mentioned in the end of this section. Section~\ref{section3} deals the case of general strip geometry. A~multi-variate generating function of open string amplitudes is constructed for each external line, or ``leg'', of the web diagram, and shown to be a KP tau function with a fermionic expression. Actually, the tau functions on the vertical legs and those on the non-vertical legs turn out to have distinct structures. This dif\/ference is ref\/lected in the types of functions forming the admissible bases. The generating operator~$G$ and the $q$-dif\/ference operators~$A$,~$B$ are computed explicitly. Results of the previous work~\cite{Takasaki13} are reproduced successfully. Section~\ref{section4} presents similar results in the case of closed topological vertex. The relevant tau functions are shown to have a slightly more complicated structure. This characteristic is inherited by the functions forming the associated admissible basis. Nevertheless, computations are mostly parallel to the case of strip geometry. Actually, the previous work~\cite{TN15} can be reproduced in a more systematic way in the language of the operators $G$, $A$, $B$.

\section[$q$-dif\/ference Kac--Schwarz operators]{$\boldsymbol{q}$-dif\/ference Kac--Schwarz operators}\label{section2}

\subsection{Generating operator of admissible basis}

A point $W$ of the big cell of the Sato Grassmannian is realized by a linear subspace
\begin{gather*}
 W = \operatorname{Span}\{\Phi_j(x)\}_{j=0}^\infty
\end{gather*}
of the space $V = \CC((x))$ of formal Laurent series with an admissible basis
\begin{gather*}
 \Phi_j(x) = \sum_{i=-j}^\infty \phi_{ij}x^i,\qquad \phi_{-j,j} \not= 0,\qquad j = 0,1,\ldots.
\end{gather*}
For convenience, we allow the leading coef\/f\/icients $\phi_{-j,j}$ to be unnormalized, though they can be normalized to~$1$. The variable~$x$ amounts to the inverse of the usual spectral parameter~$z$ of the KP hierarchy:
\begin{gather*}
 x = z^{-1}.
\end{gather*}

If the admissible basis is generated by a single invertible $q$-dif\/ference operator $G = G(x,q^D)$ as
\begin{gather}
 \Phi_j(x) = Gx^{-j},\qquad j = 0,1,\dots, \label{PhijG-rel}
\end{gather}
the basis elements satisfy the linear $q$-dif\/ference equations~(\ref{ABPhij-eq}) with respect to the operators~$A$,~$B$ def\/ined by~(\ref{ABG-rel}). In particular, the f\/irst basis element $\Phi_0(x)$ satisf\/ies the equation
\begin{gather}
 A\Phi_0(x) = \Phi_0(x), \label{APhi0-eq}
\end{gather}
which we identify with the quantum mirror curve of topological string theory.

(\ref{ABG-rel}) are $q$-dif\/ference analogues of the Kac--Schwarz operators for Hurwitz numbers of various types \cite{Alexandrov1404,ALS1512,Zhou1512}. In those cases, $G$ is a dif\/ferential operator of inf\/inite order, $G = G(x,D)$. The Kac--Schwarz operators $A$, $B$ are def\/ined as
\begin{gather*}
 A = - G\cdot D\cdot G^{-1},\qquad B = G\cdot x^{-1}\cdot G^{-1}
\end{gather*}
and satisfy the twisted canonical commutation relation (\ref{AB=qBA}). The admissible basis generated by~$G$, in turn, satisf\/ies the linear dif\/ferential equations
\begin{gather*}
 A\Phi_j(x) = j\Phi_j(x),\qquad B\Phi_j(x) = \Phi_{j+1}(x).
\end{gather*}
The lowest equation
\begin{gather*}
 A\Phi_0(x) = 0
\end{gather*}
gives the quantum spectral curve for Hurwitz numbers.

$q$-dif\/ference Kac--Schwarz operators are presented by Zhou~\cite{Zhou1512} for open string amplitudes of~$\CC^3$ and the resolved conifold with a general framing parameter~$f$. We review these results (in the case where $f = 0$) as prototypes of our subsequent consideration. Unlike Zhou's formulation, we make full use of the theory of Schur functions~\cite{Mac-book} and the fermionic formalism of integrable hierarchies~\cite{MJD-book}.

\begin{figure}[t]\centering
\includegraphics[scale=0.7]{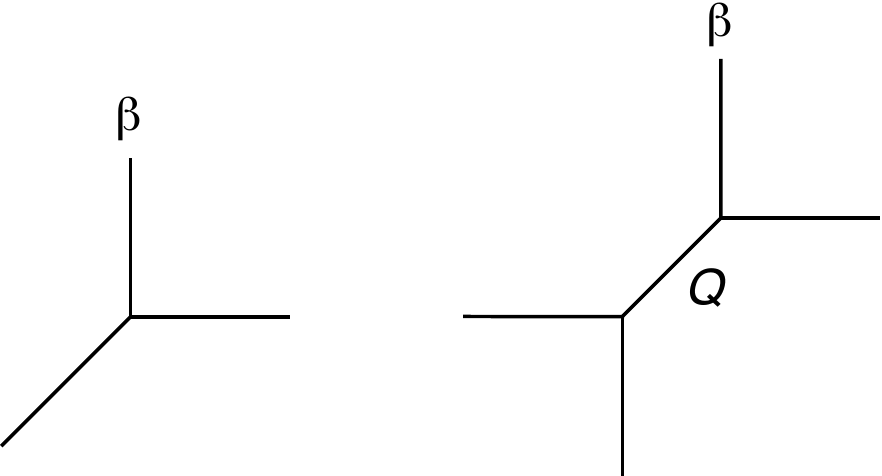}
\caption{Web diagram of $\CC^3$ (left) and resolved conifold (right).}\label{fig1}
\end{figure}

\subsection[Example: $\CC^3$]{Example: $\boldsymbol{\CC^3}$}

The open string amplitudes $Z_\beta$, $\beta = (\beta_1,\beta_2,\ldots) \in \calP$ (the set of all partitions), of topological string theory on~$\CC^3$ in the presence of a stack of branes (Fig.~\ref{fig1}, left) are given by special values of the inf\/inite-variate Schur functions $s_\lambda$ as
\begin{gather*}
 Z_\beta= s_{\tp{\beta}}\big(q^{-\rho}\big),\qquad q^{-\rho} = \big(q^{1/2},q^{3/2},\ldots,q^{i-1/2},\ldots\big),
\end{gather*}
where $\tp{\beta}$ denotes the conjugate partition of $\beta$. $Z_\beta$ is nothing but the special case $C_{\varnothing\varnothing\beta}$ of the topological vertex.

We use the Schur functions $s_{\tp{\beta}}(\bsx)$ of auxiliary variables $\bsx = (x_1,x_2,\ldots)$ to def\/ine a generating function $\tau(\bsx)$ of~$Z_\beta$'s:
\begin{gather*}
 \tau(\bsx) = \sum_{\beta\in\calP}s_{\tp{\beta}}(\bsx)Z_\beta.
\end{gather*}
This sum can be readily computed by the Cauchy identity of Schur functions:
\begin{gather*}
 \tau(\bsx) = \sum_{\beta\in\calP}s_{\tp{\beta}}(\bsx)s_{\tp{\beta}}\big(q^{-\rho}\big) = \prod_{i,j=1}^\infty\big(1 - q^{i-1/2}x_j\big)^{-1}.
\end{gather*}
By the so called Miwa transformation
\begin{gather}
 t_k = \frac{1}{k}\sum_{i=1}^\infty x_i^k \label{tx-rel}
\end{gather}
to the time variables $\bst = (t_1,t_2,\ldots)$ of the KP hierarchy, this inf\/inite product turns into an exponential function:
\begin{gather*}
 \tau(\bsx) = \exp\left(\sum_{k=1}^\infty\frac{q^{k/2}}{1-q^k}t_k\right).
\end{gather*}
This is an exponential tau function of the KP hierarchy.

To f\/ind an associated generating operator $G$, it is convenient to start from the fermionic expression
\begin{gather}
 \tau(\bsx) = \big\langle 0|\Gamma_{+}(\bsx)\Gamma_{-}\big(q^{-\rho}\big)|0\big\rangle \label{tau-CP3}
\end{gather}
of $\tau(\bsx)$. $\langle 0|$ and $|0\rangle$ are the vacuum vectors of the fermionic Fock spaces on which the creation-annihilation operators~$\psi_n$,~$\psi^*_n$, $n \in \ZZ$, of a 2D charged free fermions act. $\Gamma_{\pm}(\bsx)$ are multi-variate extensions
\begin{gather*}
 \Gamma_{\pm}(\bsx) = \prod_{i=1}^\infty\Gamma_{\pm}(x_i)
\end{gather*}
of the vertex operators \cite{ORV03, BY08}
\begin{gather*}
 \Gamma_{\pm}(z) = \exp\left(\sum_{k=1}^\infty\frac{z^k}{k}J_{\pm k}\right),
\end{gather*}
where $J_k$'s are basis elements of the $U(1)$ current algebra:
\begin{gather*}
 J_k = \sum_{n\in\ZZ}{:}\psi_{k-n}\psi^*_n{:}.
\end{gather*}
Note that the Schur functions can be expressed as
\begin{gather*}
 s_\lambda(\bsx) = \langle 0|\Gamma_{+}(\bsx)|\lambda\rangle = \langle\lambda|\Gamma_{-}(\bsx)|0\rangle.
\end{gather*}
Assembling these building blocks, one obtains the fermionic expression~(\ref{tau-CP3}) of~$\tau(\bsx)$. Moreover, by the Miwa transformation~(\ref{tx-rel}), $\Gamma_{+}(\bsx)$ turns into the generator
\begin{gather*}
 \gamma_{+}(\bst) = \exp\left(\sum_{k=1}^\infty t_kJ_k\right)
\end{gather*}
of time evolutions of the KP hierarchy. (\ref{tau-CP3}) thereby becomes the standard expression
\begin{gather*}
 \tau(\bsx) = \big\langle 0|\gamma_{+}(\bst)\Gamma_{-}\big(q^{-\rho}\big)|0\big\rangle
\end{gather*}
of a KP tau function. The vector
\begin{gather*}
 |W\rangle = \Gamma_{-}\big(q^{-\rho}\big)|0\rangle
\end{gather*}
in the Fock space represents the associated point~$W$ of the Sato Grassmannian. This vector is generated from $|0\rangle$ by the operator~$g=\Gamma_{-}(q^{-\rho})$.

The generating operator $G$ can be obtained by translating the operator~$g$ on the fermionic Fock space to an operator on the space~$V$ of Laurent series.
The rule of translation is based on the following well known correspondences \cite{MJD-book,SS83,SW85} (see also Alexandrov's detailed account~\cite{Alexandrov1404}):
\begin{itemize}\itemsep=0pt
\item[1.] The correspondence
\begin{gather*}
 \hat{A} = \sum_{i,j\in\ZZ}a_{ij}{:}\psi_{-i}\psi^*_j{:}
 \longleftrightarrow
 A = (a_{i,j})_{i,j\in\ZZ}
\end{gather*}
between fermion bilinears and $\ZZ\times\ZZ$ matrices
based on the commutation relations
\begin{gather*}
 [\hat{A},\hat{B}] = \widehat{[A,B]} + \text{$c$-number term}.
\end{gather*}
\item[2.] The correspondence
\begin{gather*}
 D \longleftrightarrow \Delta = (i\delta_{ij})_{i,j\in\ZZ},\\
 q^D \longleftrightarrow q^\Delta = (q^i\delta_{ij})_{i,j\in\ZZ},\\
 x^{-k} \longleftrightarrow \Lambda^k = (\delta_{i+k,j})_{i,j\in\ZZ}
\end{gather*}
between dif\/ferential/dif\/ference operators and $\ZZ\times\ZZ$ matrices based on the relations
\begin{gather*}
 D\sum_{i\in\ZZ}\phi_ix^i = \sum_{i\in\ZZ}i\phi_ix^i,\\
 q^D\sum_{i\in\ZZ}\phi_ix^i = \sum_{i\in\ZZ}q^i\phi_ix^i,\\
 x^{-k}\sum_{i\in\ZZ}\phi_ix^i = \sum_{i\in\ZZ}\phi_{i+k}x^i.
\end{gather*}
\end{itemize}

By this rule, the operator
\begin{gather*}
 g = \Gamma_{-}(q^{-\rho}) = \exp\left(\sum_{i,k=1}^\infty\frac{q^{k(i-1/2)}}{k}J_{-k}\right)
\end{gather*}
on the fermionic Fock space turns into the multiplication operator
\begin{gather}
 G = \exp\left(\sum_{i,k=1}^\infty\frac{q^{k(i-1/2)}}{k}x^k\right) = \prod_{i=1}^\infty\big(1 - q^{i-1/2}x\big)^{-1} \label{G-CP3}
\end{gather}
on $V$. The multiplier is a quantum dilogarithmic function. The admissible basis (\ref{PhijG-rel}) consists of the functions
\begin{gather}
 \Phi_j(x) = x^{-j}\prod_{i=1}^\infty\big(1 - q^{i-1/2}x\big)^{-1} = \sum_{k=0}^\infty \frac{q^{k/2}}{(q;q)_k}x^{k-j}, \label{Phij-CP3}
\end{gather}
where we have used the $q$-Pochhammer symbol
\begin{gather*}
 (a;q)_k = \begin{cases}
 (1 - a)(1 - aq)\cdots \big(1 - aq^{k-1}\big) &\text{if $k > 0$},\\
 1 & \text{if $k = 0$}.
 \end{cases}
\end{gather*}

It is straightforward to compute the $q$-dif\/ference Kac--Schwarz operators (\ref{ABG-rel}) from (\ref{G-CP3}). The operator $A = G\cdot q^{-D}\cdot G^{-1}$ can be computed with the aid of the operator identity
\begin{gather}
 q^Df(x)q^{-D} = f(qx) \label{q^D-conj}
\end{gather}
as
\begin{gather}
 A = q^{-D}\prod_{i=1}^\infty\big(1 - q^{i+1/2}x\big)^{-1} \cdot\prod_{i=1}^\infty\big(1 - q^{i-1/2}x\big) = q^{-D}\big(1 - q^{1/2}x\big). \label{A-CP3}
\end{gather}
The operator $B = G\cdot x^{-1}\cdot G^{-1}$ becomes a trivial one:
\begin{gather*}
 B = x^{-1}.
\end{gather*}

The $q$-dif\/ference equation (\ref{APhi0-eq}) for $\Phi_0(x)$ reads
\begin{gather*}
 q^{-D}\big(1 - q^{1/2}x\big)\Phi_0(x) = \Phi_0(x)
\end{gather*}
or, equivalently,
\begin{gather*}
 \big(1 - q^{1/2}x - q^D\big)\Phi_0(x) = 0,
\end{gather*}
which agrees with the well known equation of the quantum mirror curve of $\CC^3$ \cite{ADKMV03,AKMV03}.

\subsection{Example: Resolved conifold}

The foregoing construction of the operators $G$, $A$, $B$ can be extended to the resolved conifold. The unnormalized open string amplitudes~$Z_\beta$ with a~stack of branes on one of the external lines of the web diagram (Fig.~\ref{fig1}, right) can be expressed as
\begin{gather*}
 Z_\beta = s_{\tp{\beta}}\big(q^{-\rho}\big) \prod_{i,j=1}^\infty\big(1 - Qq^{-\beta_i+i+j-1}\big),
\end{gather*}
where $Q$ is the K\"ahler parameter of the internal line of the web diagram. We consider the generating function
\begin{gather*}
 \tau(\bsx) = \sum_{\beta\in\calP}s_{\tp{\beta}}(\bsx)Z_\beta/Z_\varnothing = \sum_{\beta\in\calP}s_{\tp{\beta}}(\bsx)s_{\tp{\beta}}\big(q^{-\rho}\big)
 \prod_{i,j=1}^\infty\dfrac{1 - Qq^{-\beta_i+i+j-1}}{1 - Qq^{i+j-1}}
\end{gather*}
of the normalized amplitudes $Z_\beta/Z_\varnothing$.

To derive a fermionic expression of $\tau(\bsx)$, we recall the following formula, $\lambda = (\lambda_1,\lambda_2,\ldots) \in \calP$, from the previous work~\cite{TN15}:
\begin{gather}
 \prod_{i,j=1}^\infty\dfrac{1 - Qq^{-\lambda_i+i+j-1}}{1 - Qq^{i+j-1}} = \left\langle\tp{\lambda}\,|\, \exp\left(\sum_{k=1}^\infty\frac{Q^k}{k(1-q^k)}V^{(k)}_0\right) |\,\tp{\lambda}\right\rangle. \label{V(k)_0}
\end{gather}
$\langle\lambda|$ and $|\lambda\rangle$ denote the excited states
\begin{gather*}
 \langle \lambda| = \langle -\infty\,|\,\cdots\psi^*_{\lambda_i-i+1} \cdots\psi^*_{\lambda_2-1}\psi^*_{\lambda_1},\qquad
 |\lambda\rangle = \psi_{-\lambda_1}\psi_{-\lambda_2+1}\cdots \psi_{-\lambda_i+i-1}\cdots\,|\,-\infty\rangle
\end{gather*}
in the charge-$0$ sector of the Fock spaces. The ground states $\langle\varnothing|,|\varnothing\rangle$ labeled by the zero partition are nothing but
the vacuum vectors $\langle 0|,|0\rangle$. $V^{(k)}_0$'s are the $m = 0$ part of the basis
\begin{gather*}
 V^{(k)}_m = q^{-km/2}\sum_{n\in\ZZ}q^{kn}{:}\psi_{m-n}\psi^*_{n}{:},\qquad k,m\in\ZZ,
\end{gather*}
of the quantum torus algebra \cite{NT07,OP03}. Since $V^{(k)}_0$'s do not mix the excited states, the fermionic expression (\ref{V(k)_0}) of the $Q$-dependent factor in the def\/inition of $\tau(\bsx)$ can be merged to the fermionic expression
\begin{gather*}
 s_{\tp{\beta}}(q^{-\rho}) = \big\langle\tp{\beta}\,|\,\Gamma_{-}(q^{-\rho})\,|\,0\big\rangle
\end{gather*}
of $s_{\tp{\beta}}(q^{-\rho})$ as
\begin{gather*}
 \prod_{i,j=1}^\infty\dfrac{1 - Qq^{-\beta_i+i+j-1}}{1 - Qq^{i+j-1}}
 s_{\tp{\beta}}\big(q^{-\rho}\big) = \left\langle\tp{\beta}\,|\, \exp\left(\sum_{k=1}^\infty\frac{Q^k}{k(1-q^k)}V^{(k)}_0\right)
 \Gamma_{-}\big(q^{-\rho}\big)\,|\,0\right\rangle.
\end{gather*}
Thus $\tau(\bsx)$ can be expressed in the fermionic form
\begin{gather}
 \tau(\bsx) = \left\langle 0\,|\,\Gamma_{+}(\bsx) \exp\left(\sum_{k=1}^\infty\frac{Q^k}{k(1-q^k)}V^{(k)}_0\right)
 \Gamma_{-}\big(q^{-\rho}\big)\,|\,0\right\rangle. \label{tau-conifold}
\end{gather}
The associated point $W$ of the Sato Grassmannian is determined by the vector
\begin{gather*}
 |W\rangle = \exp\left(\sum_{k=1}^\infty\frac{Q^k}{k(1-q^k)}V^{(k)}_0\right) \Gamma_{-}\big(q^{-\rho}\big)|0\rangle
\end{gather*}
in the fermionic Fock space.

The generating operator $G$ of an admissible basis of $W$ can be obtained by translating the generating operator
\begin{gather*}
 g = \exp\left(\sum_{k=1}^\infty\frac{Q^k}{k(1-q^k)}V^{(k)}_0\right) \Gamma_{-}\big(q^{-\rho}\big)
\end{gather*}
to an operator on $V$. We have seen that $\Gamma_{-}(q^{-\rho})$ amounts to the multiplicative operator~(\ref{G-CP3}). As regard the f\/irst factor, since
\begin{gather*}
 V^{(k)}_0 = \sum_{n\in\ZZ}q^{kn}{:}\psi_{-n}\psi^*_n{:}
\end{gather*}
corresponds to the $q$-shift operator $q^{kD}$, one can compute its avatar on $V$ as
\begin{gather*}
 \exp\left(\sum_{k=1}^\infty\frac{Q^k}{k(1-q^k)}q^{kD}\right) = \exp\left(\sum_{i,k=1}^\infty\frac{(Qq^{i-1+D})^k}{k}\right)
 = \prod_{i=1}^\infty \big(1 - Qq^{i-1+D}\big)^{-1}.
\end{gather*}
$G$ is a product of these two operators:
\begin{gather}
 G = \prod_{i=1}^\infty \big(1 - Qq^{i-1+D}\big)^{-1} \cdot\prod_{i=1}^\infty \big(1 - q^{i-1/2}x\big)^{-1}. \label{G-conifold}
\end{gather}

To compute the admissible basis (\ref{PhijG-rel}), one can partly make use of the results for~$\CC^3$. Namely, applying $G$ to $x^{-j}$'s amount to applying
$\prod\limits_{i=1}^\infty (1 - Qq^{i-1+D})^{-1}$ to the admissible basis~(\ref{Phij-CP3}) of~$\CC^3$. Thus the admissible basis in this case can be computed as
\begin{gather*}
 \Phi_j(x) = \prod_{i=1}^\infty \big(1 - Qq^{i-1+D}\big)^{-1} \left(x^{-j}\prod_{i=1}^\infty\big(1 - q^{i-1/2}x\big)^{-1}\right) \\
\hphantom{\Phi_j(x)}{} = \prod_{i=1}^\infty \big(1 - Qq^{i-1+D}\big)^{-1}
 \left(\sum_{k=0}^\infty\frac{q^{k/2}}{(q;q)_k}x^{k-j}\right) = \sum_{k=0}^\infty\frac{q^{k/2}}{(q;q)_k}x^{k-j}
 \prod_{i=1}^\infty\big(1 - Qq^{i-1+k-j}\big)^{-1}.
\end{gather*}
Rewriting this expression slightly, one f\/inds that
\begin{gather}
 \Phi_j(x) = \frac{1}{\prod\limits_{i=1}^\infty(1- q^{i-j-1})} \sum_{k=0}^\infty\frac{(Qq^{-j};q)_kq^{k/2}}{(q;q)_k}x^{k-j}. \label{Phij-conifold}
\end{gather}
Thus the essential part of $\Phi_j(x)$'s turn out to be $q$-hypergeometric functions.

One can similarly cut unnecessary computation for the $q$-dif\/ference Kac--Schwarz opera\-tors~(\ref{ABG-rel}). Namely, one can use~(\ref{A-CP3}) to rewrite
$A = G\cdot q^{-D}\cdot G^{-1}$ as
\begin{gather*}
 A = \prod_{i=1}^\infty\big(1 - Qq^{i-1+D}\big)^{-1} \cdot q^{-D}\big(1 - q^{1/2}x\big) \cdot \prod_{i=1}^\infty\big(1 - Qq^{i-1+D}\big)\\
\hphantom{A=}{} = q^{-D} - q^{-D}q^{1/2}\prod_{i=1}^\infty\big(1 - Qq^{i-1+D}\big)^{-1} \cdot x\cdot \prod_{i=1}^\infty\big(1 - Qq^{i-1+D}\big).
\end{gather*}
By the operator identity
\begin{gather}
 x^{-1}f(D)x = f(D+1), \label{x-conj}
\end{gather}
the main part of the second term on the right side can be computed as
\begin{gather*}
 \prod_{i=1}^\infty\big(1 - Qq^{i-1+D}\big)^{-1} \cdot x\cdot \prod_{i=1}^\infty\big(1 - Qq^{i-1+D}\big)
= x\prod_{i=1}^\infty\big(1 - Qq^{i+D}\big)^{-1} \cdot\prod_{i=1}^\infty\big(1 - Qq^{i-1+D}\big) \\
\hphantom{\prod_{i=1}^\infty\big(1 - Qq^{i-1+D}\big)^{-1} \cdot x\cdot \prod_{i=1}^\infty\big(1 - Qq^{i-1+D}\big)}{} = x\big(1 - Qq^D\big).
\end{gather*}
Thus $A$ can be expressed as
\begin{gather}
 A = q^{-D}\big(1 - q^{1/2}x\big(1 - Qq^D\big)\big). \label{A-conifold}
\end{gather}
Moreover, this computation shows that $B$ can be obtained as an inverse of $x(1 - Qq^D)$:
\begin{gather}
 B = \big(1 - Qq^D\big)^{-1}x^{-1}. \label{B-conifold}
\end{gather}

The $q$-dif\/ference equation (\ref{APhi0-eq}) for $\Phi_0(x)$ reads
\begin{gather*}
 q^{-D}\big(1 - q^{1/2}x\big(1 - Qq^D\big)\big)\Phi_0(x) = \Phi_0(x).
\end{gather*}
This equation can be converted to the equation
\begin{gather}
 \big(1 - q^{1/2}x\big)\Phi_0(x) = \big(1 - Qq^{1/2}x\big)\Phi_0(qx), \label{Phi0-eq-conifold}
\end{gather}
known in the literature \cite{GS11,HY06,KP06,KP08,Zhou1207}.

\subsection{Relation to hypergeometric tau functions}

(\ref{tau-CP3}) and (\ref{tau-conifold}) belong to a class of KP tau functions that can be expressed in the $\bsx$-variables as
\begin{gather}
 \tau(\bsx) = \big\langle 0\,|\,\Gamma_{+}(\bsx)h\Gamma_{-}\big(q^{-\rho}\big)\,|\,0\big\rangle, \label{tauh-gen}
\end{gather}
where $h$ is an operator of the form
\begin{gather}
 h = \exp\left(\sum_{n\in\ZZ}\log h_n{:}\psi_{-n}\psi^*_n{:}\right).
 \label{h-gen}
\end{gather}
In the correspondence with $\ZZ\times\ZZ$ matrices, $h$ amounts to a diagonal matrix, and the parame\-ters~$h_n$ are nothing but the diagonal matrix elements.
The generating operator~$G$ on~$V$ corresponds to the generating operator $g = h\Gamma_{-}(q^{-\rho})$ on the fermionic Fock space.

These KP tau functions can be derived from tau functions of the 2D Toda hierarchy of the form
\begin{gather}
 \tau(s,\bst,\tilde{\bst}) = \big\langle s\,|\,\gamma_{+}(\bst)h\gamma_{-}(\tilde{\bst})\,|\,s\big\rangle, \label{tauh2D-gen}
\end{gather}
where
\begin{gather*}
 \gamma_{-}(\tilde{\bst}) = \exp\left(\sum_{k=1}^\infty\tilde{t}_kJ_{-k}\right),
\end{gather*}
by specialization to
\begin{gather*}
 s = 0, \qquad t_k = \frac{1}{k}\sum_{i=1}^\infty x_i^k,\qquad \tilde{t}_k = \frac{q^{k/2}}{k(1 - q^k)}.
\end{gather*}
Toda tau functions of the form (\ref{tauh2D-gen}), called ``hypergeometric tau functions'', were introduced by Orlov and Scherbin in their study on multi-variate hypergeometric functions \cite{OS01a,OS01b}. As pointed out therein (see also the work of Kharchev, Marshakov, Mironov and Morozov~\cite{KMM95} in a~dif\/ferent context), $h$ is diagonal with respect to the basis $\{|\lambda\rangle\}_{\lambda\in\calP}$ of the charge-$0$ sector, and the diagonal matrix elements have the contents-product form
\begin{gather*}
 \langle\lambda\,|\,h\,|\,\lambda\rangle = \prod_{(i,j)\in\lambda}r_{j-i+1},
\end{gather*}
where
\begin{gather*}
 r_n = \frac{h_n}{h_{n-1}}.
\end{gather*}
Since the work Okounkov \cite{Okounkov00}, hypergeometric tau functions have played a signif\/icant role as generating functions of Hurwitz numbers
and many other combinatorial notions \cite{AMMN14,GPH1408,GPH1405,Harnad1410,Harnad1504,HO1407,MSS13,Takasaki10,Zabrodin12} (see also Harnad's comprehensive review \cite{Harnad1504review}).

In the case of the resolved conifold, see (\ref{tau-conifold}), the operator $h$ can be expressed as
\begin{gather*}
 h = \exp\left(\sum_{k=1}^\infty\frac{Q^k}{k(1-q^k)}V^{(k)}_0\right).
 \label{h-conifold}
\end{gather*}
Since
\begin{gather*}
 \sum_{k=1}^\infty\frac{Q^k}{k(1-q^k)}V^{(k)}_0 = \sum_{n\in\ZZ}\sum_{k=1}^\infty\frac{Q^kq^{kn}}{k(1-q^k)} {:}\psi_{-n}\psi^*_n{:},
\end{gather*}
the parameters $h_n$, $r_n$ can be computed as
\begin{gather*}
 h_n = \exp\left(\sum_{k=1}^\infty\frac{Q^kq^{kn}}{k(1-q^k)}\right) = \prod_{i=1}^\infty\big(1 - Qq^{i-1+n}\big)^{-1}
\end{gather*}
and
\begin{gather*}
 r_n = 1 - Qq^{n-1}.
\end{gather*}
This belongs to a case considered by Orlov and Scherbin in the context of $q$-hypergeometric functions. As we show in the next section, tau functions in more general strip geometry, too, have a similar interpretation.

\section{Strip geometry}\label{section3}

\subsection{Generalities}

The toric diagram of strip geometry (Fig.~\ref{fig2}) is a strip of height $1$ divided to triangles of area~$1/2$. If the toric diagram consists of $N$ triangles, the associated web diagram is acyclic and has $N$ vertices, $N-1$ internal lines and $N+2$ external lines (called ``legs'' for short).
The legs other than the leftmost and rightmost ones are vertical. Let $Q_1,\ldots,Q_{N-1}$ denote the K\"ahler parameters assigned to the internal lines.

\begin{figure}[t] \centering
\includegraphics[scale=0.7]{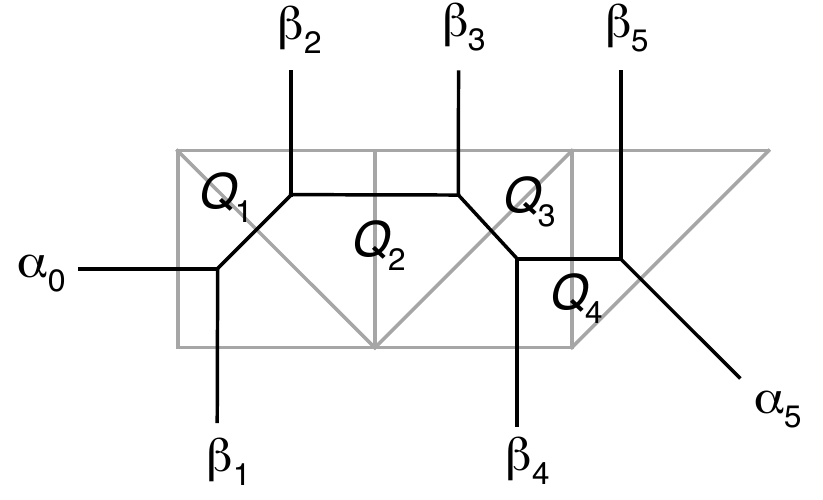}
\caption{Web diagram (solid) and toric diagram (gray) of strip geometry.}\label{fig2}
\end{figure}

Let $Z^{\alpha_0\alpha_N}_{\beta_1\cdots\beta_N}$ denote the open string amplitudes of this geometry. $\alpha_0$, $\alpha_N$ are partitions on the leftmost and rightmost legs, and $\beta_1,\ldots,\beta_N$ are partitions on the vertical legs. The method of topological vertex is known to work particularly well for these amplitudes.

In the case where $\alpha_0 = \alpha_N = \varnothing$, these amplitudes are computed by Iqbal and Kashani-Poor~\cite{IKP04} in a closed form. To state their result, we def\/ine the sign $\sigma_n = \pm 1$ of the $n$-th vertex in accordance with the direction of the vertical leg emanating from the vertex:
\begin{itemize}\itemsep=0pt
\item[(a)] $\sigma_n = +1$ if the vertical leg points up,
\item[(b)] $\sigma_n = -1$ if the vertical leg points down.
\end{itemize}
We further introduce the auxiliary notations
\begin{gather*}
 \beta^{(n)} = \begin{cases}
 \beta_n &\text{if $\sigma_n = +1$},\\
 \tp{\beta}_n &\text{if $\sigma_n = -1$},
 \end{cases}
 \qquad Q_{mn} = Q_mQ_{m+1}\cdots Q_{n-1}.
\end{gather*}
The result of Iqbal and Kashani-Poor thereby reads{\samepage
\begin{gather}
Z^{\varnothing\varnothing}_{\beta_1\cdots\beta_N} = s_{\tp{\beta}_1}\big(q^{-\rho}\big)\cdots s_{\tp{\beta}_N}\big(q^{-\rho}\big)
\!\prod_{1\leq m<n\leq N}\prod_{i,j=1}^\infty\!
 \big(1 - Q_{mn}q^{-\tp{\beta}^{(m)}_i-\beta^{(n)}_j+i+j-1}\big)^{-\sigma_m\sigma_n}, \!\!\!\label{Z-strip-IKP}
\end{gather}
where $\tp{\beta}^{(n)}_i$ and $\beta^{(n)}_i$ denote the $i$-th parts of $\tp{\beta}^{(n)}$ and $\beta^{(n)}$.}

The amplitudes not restricted to $\alpha_0 = \alpha_N = \varnothing$ are known to have the following fermionic expression~\cite{EK03,Nagao09,Sulkowski09}:
\begin{gather}
 Z^{\alpha_0\alpha_N}_{\beta_1\cdots\beta_N}= q^{(1-\sigma_1)\kappa(\alpha_0)/4}
 q^{(1+\sigma_N)\kappa(\alpha_N)/4} s_{\tp{\beta}_1}\big(q^{-\rho}\big)\cdots s_{\tp{\beta}_N}\big(q^{-\rho}\big) \notag\\
\hphantom{Z^{\alpha_0\alpha_N}_{\beta_1\cdots\beta_N}=}{} \times
\big \langle\tp{\alpha}_0\,| \,\Gamma^{\sigma_1}_{-}\big(q^{-\beta^{(1)}-\rho}\big)
 \Gamma^{\sigma_1}_{+}\big(q^{-\tp{\beta}^{(1)}-\rho}\big)(\sigma_1Q_1\sigma_2)^{L_0} \cdots \notag\\
\hphantom{Z^{\alpha_0\alpha_N}_{\beta_1\cdots\beta_N}=}{} \times
 \Gamma^{\sigma_{N-1}}_{-}\big(q^{-\beta^{(N-1)}-\rho}\big)
 \Gamma^{\sigma_{N-1}}_{+}\big(q^{-\tp{\beta}^{(N-1)}-\rho}\big)
 \big(\sigma_{N-1}Q_{N-1}\sigma_N\big)^{L_0} \notag\\
\hphantom{Z^{\alpha_0\alpha_N}_{\beta_1\cdots\beta_N}=}{}\times
 \Gamma^{\sigma_N}_{-}\big(q^{-\beta^{(N)}-\rho}\big)
 \Gamma^{\sigma_N}_{+}\big(q^{-\tp{\beta}^{(N)}-\rho}\big)\,|\,\alpha_N\big\rangle, \label{Z-strip-gen}
\end{gather}
where we have used the common notations
\begin{gather*}
 \kappa(\lambda) = \sum_{i=1}^\infty\lambda_i(\lambda_i - 2i + 1),\qquad L_0 = \sum_{n\in\ZZ}n{:}\psi_{-n}\psi^*_n{:},
\end{gather*}
and $\Gamma^\sigma_{\pm}(\bsx)$ denote either $\Gamma_{\pm}(\bsx)$ or the vertex operators
\begin{gather*}
 \Gamma'_{\pm}(\bsx) = \prod_{i=1}^\infty\Gamma'_{\pm}(x_i),\qquad
 \Gamma'_{\pm}(z) = \exp\left(- \sum_{k=1}^\infty\frac{(-z)^k}{k}J_{\pm k}\right)
\end{gather*}
of another type \cite{BY08} as
\begin{gather*}
 \Gamma^\sigma_{\pm}(\bsx) = \begin{cases}
 \Gamma_{\pm}(\bsx) &\text{if $\sigma = +1$},\\
 \Gamma'_{\pm}(\bsx) &\text{if $\sigma = -1$}.
 \end{cases}
\end{gather*}
The amplitude for $N = 1$ (with $\sigma_1 = 1$) agrees with the single topological vertex $C_{\lambda\mu\nu}$:
\begin{gather*}
 Z^{\lambda\mu}_\nu = C_{\lambda\mu\nu} = q^{\kappa(\mu)/2}s_{\tp{\nu}}\big(q^{-\rho}\big)
 \sum_{\eta\in\calP}s_{\tp{\lambda}/\eta}\big(q^{-\nu-\rho}\big) s_{\mu/\eta}\big(q^{-\tp{\nu}-\rho}\big).
\end{gather*}

The formula (\ref{Z-strip-IKP}) of Iqbal and Kashani-Poor can be recovered from~(\ref{Z-strip-gen}) as well. If $\alpha_0 = \alpha_N = \varnothing$, (\ref{Z-strip-gen})~simplif\/ies as
\begin{gather*}
 Z^{\varnothing\varnothing}_{\beta_1\cdots\beta_N} = s_{\tp{\beta}_1}\big(q^{-\rho}\big)\cdots s_{\tp{\beta}_N}\big(q^{-\rho}\big)
\big\langle 0\,|\, \Gamma^{\sigma_1}_{-}\big(q^{-\beta^{(1)}-\rho}\big)
 \Gamma^{\sigma_1}_{+}\big(q^{-\tp{\beta}^{(1)}-\rho}\big)(\sigma_1Q_1\sigma_2)^{L_0} \cdots \notag\\
\hphantom{Z^{\varnothing\varnothing}_{\beta_1\cdots\beta_N} =}{}\times
 \Gamma^{\sigma_{N-1}}_{-}\big(q^{-\beta^{(N-1)}-\rho}\big)
 \Gamma^{\sigma_{N-1}}_{+}\big(q^{-\tp{\beta}^{(N-1)}-\rho}\big)
 (\sigma_{N-1}Q_{N-1}\sigma_N)^{L_0} \notag\\
\hphantom{Z^{\varnothing\varnothing}_{\beta_1\cdots\beta_N} =}{}
\times
 \Gamma^{\sigma_N}_{-}\big(q^{-\beta^{(N)}-\rho}\big)
 \Gamma^{\sigma_N}_{+}\big(q^{-\tp{\beta}^{(N)}-\rho}\big)\,|\,0 \big\rangle.
\end{gather*}
One can use the commutation relations \cite{ORV03, BY08}
\begin{gather} \Gamma_{+}(\bsx)\Gamma_{-}(\bsy)= \prod_{i,j=1}^\infty(1 - x_iy_j)^{-1} \cdot\Gamma_{-}(\bsy)\Gamma_{+}(\bsx),\notag\\
 \Gamma'_{+}(\bsx)\Gamma'_{-}(\bsy)= \prod_{i,j=1}^\infty(1 - x_iy_j)^{-1} \cdot\Gamma'_{-}(\bsy)\Gamma'_{+}(\bsx),\notag\\
 \Gamma_{+}(\bsx)\Gamma'_{-}(\bsy)= \prod_{i,j=1}^\infty(1 + x_iy_j) \cdot\Gamma'_{-}(\bsy)\Gamma_{+}(\bsx),\notag\\
 \Gamma'_{+}(\bsx)\Gamma_{-}(\bsy)= \prod_{i,j=1}^\infty(1 + x_iy_j) \cdot\Gamma_{-}(\bsy)\Gamma'_{+}(\bsx) \label{Gamma-com-rel}
\end{gather}
to move $\Gamma^\sigma_{+}$'s to the right until they hit $|0\rangle$ and disappears. What remains are~$\Gamma^\sigma_{-}$'s, which hit~$\langle 0|$ and disappear. Contributions of $c$-number factors in~(\ref{Gamma-com-rel}) give rise to the inf\/inite products in~(\ref{Z-strip-IKP}).

\subsection{Generating functions for vertical legs}

Let us consider the generating function
\begin{gather*}
 \tau_n(\bsx) = \sum_{\beta_n\in\calP}s_{\tp{\beta}_n}(\bsx)Z_{n,\beta_n}/Z,
\end{gather*}
where
\begin{gather*}
 Z_{n,\beta_n} = Z^{\varnothing\varnothing}_{\cdots\varnothing\beta_n\varnothing\cdots},\qquad
 Z = Z^{\varnothing\varnothing}_{\varnothing\cdots\varnothing},
\end{gather*}
of the normalized open string amplitudes with respect to the partition $\beta_n$ on the $n$-th vertical leg. As we show below, $\tau_n(\bsx)$ has a fermionic expression that implies that $\tau_n(\bsx)$ is a special KP tau function of the type shown in~(\ref{tauh-gen}).

The unnormalized amplitudes are a specialization of (\ref{Z-strip-IKP}):
\begin{gather*}
 Z_{n,\beta_n} = s_{\tp{\beta}_n}\big(q^{-\rho}\big) \prod_{m<n}\prod_{i,j=1}^\infty
 \big(1 - Q_{mn}q^{-\beta^{(n)}_i+i+j-1}\big)^{-\sigma_m\sigma_n}\notag\\
\hphantom{Z_{n,\beta_n} =}{}\times \prod_{m>n}\prod_{i,j=1}^\infty \big(1 - Q_{nm}q^{-\tp{\beta}^{(n)}_i+i+j-1}\big)^{-\sigma_n\sigma_m}.
\end{gather*}
The normalized amplitudes $Z_{n,\beta_n}/Z$ can be cast into a fermionic expression with the aid of~(\ref{V(k)_0}) and its variant~\cite{TN15}
\begin{gather}
 \prod_{i,j=1}^\infty\dfrac{1 - Qq^{-\tp{\lambda}_i+i+j-1}}{1 - Qq^{i+j-1}}
 = \left\langle\tp{\lambda}\,|\, \exp\left(- \sum_{k=1}^\infty\frac{Q^kq^k}{k(1-q^k)}V^{(-k)}_0\right)\,|\,\tp{\lambda}\right\rangle \label{V(-k)_0}
\end{gather}
as follows.

\begin{Lemma}
The normalized amplitudes can be expressed as
\begin{gather}
 Z_{n,\beta_n}/Z = \big\langle\tp{\beta}_n\,|\,h\Gamma_{-}(q^{-\rho})\,|\,0\big\rangle, \label{Z(n)/norm}
\end{gather}
where
\begin{gather}
 h = \exp\left(- \sum_{m<n}\sigma_m\sum_{k=1}^\infty \frac{Q_{mn}^k}{k(1-q^k)}V^{(k)}_0
 + \sum_{m>n}\sigma_m\sum_{k=1}^\infty \frac{Q_{nm}^kq^k}{k(1-q^k)}V^{(-k)}_0\right) \label{h-sigma=+1}
\end{gather}
in the case where $\sigma_n = +1$, and
\begin{gather}
 h = \exp\left(- \sum_{m<n}\sigma_m\sum_{k=1}^\infty \frac{Q_{mn}^kq^k}{k(1-q^k)}V^{(-k)}_0
 + \sum_{m>n}\sigma_m\sum_{k=1}^\infty \frac{Q_{nm}^k}{k(1-q^k)}V^{(k)}_0\right) \label{h-sigma=-1}
\end{gather}
in the case where $\sigma_n = -1$.
\end{Lemma}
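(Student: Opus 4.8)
The plan is to start from the closed-form amplitude (\ref{Z-strip-IKP}) of Iqbal and Kashani-Poor, specialized to the situation in which all partitions except $\beta_n$ are empty, and then to divide out the vacuum amplitude $Z$. First I would observe that in the product over pairs $1\le m<n'\le N$ in (\ref{Z-strip-IKP}) a factor is non-trivial in a $\beta_n$-dependent way only when one of the two indices equals $n$, since $\beta^{(m)}=\varnothing$ for $m\ne n$. Hence the ``background'' factors coming from pairs $(m,m')$ with $m,m'\ne n$ are precisely the $\beta_n$-independent factors also present in $Z = Z^{\varnothing\varnothing}_{\varnothing\cdots\varnothing}$, and they cancel in the ratio $Z_{n,\beta_n}/Z$. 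What survives is the Schur prefactor $s_{\tp{\beta}_n}(q^{-\rho})$ together with the normalized contents-product ratios
\begin{gather*}
 \prod_{m<n}\left[\prod_{i,j=1}^\infty\frac{1 - Q_{mn}q^{-\beta^{(n)}_i+i+j-1}}{1 - Q_{mn}q^{i+j-1}}\right]^{-\sigma_m\sigma_n}
 \prod_{m>n}\left[\prod_{i,j=1}^\infty\frac{1 - Q_{nm}q^{-\tp{\beta}^{(n)}_i+i+j-1}}{1 - Q_{nm}q^{i+j-1}}\right]^{-\sigma_n\sigma_m},
\end{gather*}
each pair-factor for a leg meeting $n$ being divided by its value at $\beta_n=\varnothing$.

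Next I would convert each normalized pair-factor into a diagonal fermionic matrix element by invoking (\ref{V(k)_0}) and (\ref{V(-k)_0}), distinguishing the two cases of the vertex sign. When $\sigma_n=+1$ one has $\beta^{(n)}=\beta_n$ and $\tp{\beta}^{(n)}=\tp{\beta}_n$, so the $m<n$ factors are read off from (\ref{V(k)_0}) with $Q=Q_{mn}$ and $\lambda=\beta_n$, while the $m>n$ factors match (\ref{V(-k)_0}) with $Q=Q_{nm}$; when $\sigma_n=-1$ the roles of $\beta_n$ and $\tp{\beta}_n$ are interchanged, so the two formulas swap places. The crucial point throughout is that each $V^{(k)}_0$ acts diagonally on the basis $\{|\lambda\rangle\}$, so raising a matrix element $\langle\tp{\beta}_n|\exp(\,\cdots V^{(k)}_0)|\tp{\beta}_n\rangle$ to the power $-\sigma_m\sigma_n$ simply multiplies the exponent by that integer, and all of the resulting exponentials --- one per pair meeting $n$ --- can be merged into a single exponential of a sum of $V^{(k)}_0$'s and $V^{(-k)}_0$'s. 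Carrying out this bookkeeping reproduces exactly the operator $h$ of (\ref{h-sigma=+1}) and (\ref{h-sigma=-1}).

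Finally I would reinstate the Schur prefactor through the fermionic identity $s_{\tp{\beta}_n}(q^{-\rho}) = \langle\tp{\beta}_n|\Gamma_{-}(q^{-\rho})|0\rangle$ already used in the $\CC^3$ computation. Since $h$ is diagonal in the $\{|\lambda\rangle\}$ basis, $\langle\tp{\beta}_n|h$ equals $\langle\tp{\beta}_n|h|\tp{\beta}_n\rangle\,\langle\tp{\beta}_n|$, so the scalar $\langle\tp{\beta}_n|h|\tp{\beta}_n\rangle$ --- which is the product of contents-factors above --- may be pulled to the left and combined with $\langle\tp{\beta}_n|\Gamma_{-}(q^{-\rho})|0\rangle$ to give $\langle\tp{\beta}_n|h\Gamma_{-}(q^{-\rho})|0\rangle$, which is precisely (\ref{Z(n)/norm}). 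I expect the only genuine obstacle to be the sign-and-transpose bookkeeping: one must verify that each exponent $-\sigma_m\sigma_n$, together with the substitution $\beta^{(n)}\leftrightarrow\tp{\beta}^{(n)}$ forced by $\sigma_n$, selects the correct one of (\ref{V(k)_0}) and (\ref{V(-k)_0}) and produces the exact coefficients --- including the extra factor $q^k$ attached to the $V^{(-k)}_0$ terms --- appearing in (\ref{h-sigma=+1}) and (\ref{h-sigma=-1}). The remaining manipulations are routine assembly.
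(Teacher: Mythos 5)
Your proposal is correct and follows essentially the same route as the paper: specialize the Iqbal--Kashani-Poor formula to a single nonempty $\beta_n$, normalize so that only the pair-factors meeting $n$ survive as contents-product ratios, convert these via (\ref{V(k)_0}) and (\ref{V(-k)_0}) according to the sign $\sigma_n$, and merge the resulting diagonal exponentials with $s_{\tp{\beta}_n}(q^{-\rho}) = \langle\tp{\beta}_n|\Gamma_{-}(q^{-\rho})|0\rangle$. The sign-and-transpose bookkeeping you flag as the only delicate point indeed works out exactly as you describe.
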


\begin{proof}
First consider the case where $\sigma_n = +1$. The normalized amplitudes in this case can be expressed as
\begin{gather*}
 Z_{n,\beta_n}/Z = s_{\tp{\beta}_n}(q^{-\rho})\prod_{m<n}
 \prod_{i,j=1}^\infty\frac{(1 - Q_{mn}q^{-\beta_{n,i}+i+j-1})^{-\sigma_m}}
 {(1 - Q_{mn}q^{i+j-1})^{-\sigma_m}}\\
\hphantom{Z_{n,\beta_n}/Z =}{} \times \prod_{m>n} \prod_{i,j=1}^\infty\frac{(1 - Q_{nm}q^{-\tp{\beta}_{n,i}+i+j-1})^{-\sigma_m}}
 {(1 - Q_{nm}q^{i+j-1})^{-\sigma_m}}.
\end{gather*}
One can apply (\ref{V(k)_0}) to the quotients in the product over $m < n$ and~(\ref{V(-k)_0}) to the quotient in the product over $m > n$. This yields an expression of the form~(\ref{Z(n)/norm}) with $h$ def\/ined as~(\ref{h-sigma=+1}). Next examine the case where $\sigma_n = -1$. The normalized amplitudes changes as
\begin{gather*}
 Z_{n,\beta_n}/Z = s_{\tp{\beta}_n}(q^{-\rho})\prod_{m<n} \prod_{i,j=1}^\infty\frac{(1 - Q_{mn}q^{-\tp{\beta}_{n,i}+i+j-1})^{\sigma_m}}
 {(1 - Q_{mn}q^{i+j-1})^{\sigma_m}}\\
\hphantom{Z_{n,\beta_n}/Z =}{} \times \prod_{m>n} \prod_{i,j=1}^\infty\frac{(1 - Q_{nm}q^{-\beta_{n,i}+i+j-1})^{\sigma_m}} {(1 - Q_{nm}q^{i+j-1})^{\sigma_m}}.
\end{gather*}
One can derive an expression of the form (\ref{Z(n)/norm}) in the same way with the operator $h$ def\/ined as~(\ref{h-sigma=-1}).
\end{proof}

The fermionic expression (\ref{Z(n)/norm}) of the normalized amplitudes imply the following conclusion.

\begin{Theorem}
$\tau_n(\bsx)$ is a tau function of the KP hierarchy with the fermionic expression
\begin{gather}
 \tau_n(\bsx) = \big\langle 0\,|\,\Gamma_{+}(\bsx)h\Gamma_{-}(q^{-\rho})\,|\,0\rangle, \label{taun-strip}
\end{gather}
where $h$ is the operator shown in \eqref{h-sigma=+1} and \eqref{h-sigma=-1}.
\end{Theorem}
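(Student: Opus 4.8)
The plan is to substitute the Lemma's fermionic formula for the normalized amplitudes into the defining series of $\tau_n(\bsx)$ and then collapse the resulting partition sum by a resolution of the identity. First I would write
\begin{gather*}
 \tau_n(\bsx) = \sum_{\beta_n\in\calP}s_{\tp{\beta}_n}(\bsx)Z_{n,\beta_n}/Z
\end{gather*}
and insert two ingredients: the fermionic form of the Schur function, $s_{\tp{\beta}_n}(\bsx) = \langle 0|\Gamma_{+}(\bsx)|\tp{\beta}_n\rangle$, recorded in the $\CC^3$ example, and the expression $Z_{n,\beta_n}/Z = \langle\tp{\beta}_n|h\Gamma_{-}(q^{-\rho})|0\rangle$ supplied by the Lemma. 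This rewrites each summand as a product of two matrix elements sharing the common intermediate state $|\tp{\beta}_n\rangle$, $\langle\tp{\beta}_n|$.

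The key step is a reindexing followed by completeness. Since partition conjugation $\beta_n\mapsto\tp{\beta}_n$ is an involution, hence a bijection of $\calP$, summing over $\beta_n$ is the same as summing over $\lambda:=\tp{\beta}_n$, so
\begin{gather*}
 \tau_n(\bsx) = \sum_{\lambda\in\calP}\big\langle 0|\Gamma_{+}(\bsx)|\lambda\big\rangle\big\langle\lambda|h\Gamma_{-}(q^{-\rho})|0\big\rangle .
\end{gather*}
Now I would invoke the completeness relation $\sum_{\lambda\in\calP}|\lambda\rangle\langle\lambda| = 1$ in the charge-$0$ sector. This is legitimate because $\{|\lambda\rangle\}_{\lambda\in\calP}$ is an orthonormal basis of that sector and every operator in sight ($\Gamma_{+}(\bsx)$, $h$, $\Gamma_{-}(q^{-\rho})$) preserves the charge, so both $\langle 0|\Gamma_{+}(\bsx)$ and $h\Gamma_{-}(q^{-\rho})|0\rangle$ live in (the dual of) the charge-$0$ sector. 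Inserting this resolution of the identity between $\Gamma_{+}(\bsx)$ and $h\Gamma_{-}(q^{-\rho})$ collapses the sum to the single matrix element \eqref{taun-strip}.

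To conclude that $\tau_n(\bsx)$ is a genuine KP tau function, I would observe that \eqref{taun-strip} is exactly of the form \eqref{tauh-gen} with generating operator $g = h\Gamma_{-}(q^{-\rho})$. Under the Miwa transformation \eqref{tx-rel} the factor $\Gamma_{+}(\bsx)$ becomes $\gamma_{+}(\bst)$, so $\tau_n = \langle 0|\gamma_{+}(\bst)g|0\rangle$ is the standard bosonized expression of a KP tau function whose Sato Grassmannian point is represented by $g|0\rangle$. Because the operators $h$ of \eqref{h-sigma=+1}, \eqref{h-sigma=-1} are exponentials of charge-preserving fermion bilinears (the $V^{(k)}_0$, $V^{(-k)}_0$) and $\Gamma_{-}(q^{-\rho})$ is a well-defined element of the group acting on the Fock space, $g$ lies in that group and the tau-function property follows from the general correspondence already used in Section~\ref{section2}.

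The main point to be careful about is the resolution-of-identity step, namely that the formal partition sum genuinely reproduces the charge-$0$ completeness relation and that interchanging the summation with the bra--ket pairing is justified. I expect this to be harmless once everything is read as an identity of formal power series in $\bsx$ (equivalently in the KP times $\bst$): each coefficient picks up only finitely many partitions through $\Gamma_{+}(\bsx)$, so no convergence issue arises. Everything else---the substitution of the Lemma, the bijectivity of conjugation, and the charge bookkeeping---is routine.
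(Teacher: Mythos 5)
Your proposal is correct and is essentially the argument the paper intends: the paper states that the Lemma's expression $Z_{n,\beta_n}/Z = \langle\tp{\beta}_n|h\Gamma_{-}(q^{-\rho})|0\rangle$ ``implies'' the Theorem, and the implicit steps are exactly your substitution of $s_{\tp{\beta}_n}(\bsx)=\langle 0|\Gamma_{+}(\bsx)|\tp{\beta}_n\rangle$, reindexing by the conjugation involution, and collapsing the partition sum via charge-$0$ completeness, just as is done explicitly for $\CC^3$ and the resolved conifold in Section~\ref{section2}. Your added care about charge conservation and the formal-power-series reading of the completeness step is a faithful elaboration rather than a deviation.
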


\begin{Remark}
$h$ is an operator of the form (\ref{h-gen}). The parameters $h_j$, $r_j$ can be computed as follows:
\begin{itemize}\itemsep=0pt
\item[(a)] In the case where $\sigma_n = +1$,
\begin{gather}
 h_j = \exp\left(- \sum_{m<n}\sigma_m\sum_{k=1}^\infty \frac{Q_{mn}^k}{k(1-q^k)}q^{kj} + \sum_{m>n}\sigma_m\sum_{k=1}^\infty
 \frac{Q_{nm}^kq^k}{k(1-q^k)}q^{-kj}\right) \notag\\
\hphantom{h_j}{} = \prod_{m<n}\exp\left(- \sigma_m\sum_{i,k=1}\frac{(Q_{mn}q^{i-1+j})^k}{k}
 \right)\cdot \prod_{m>n}\exp\left(\sigma_m\sum_{i,k=1}^\infty\frac{(Q_{nm}q^{i-j})^k}{k} \right) \notag\\
\hphantom{h_j}{}= \prod_{m<n}\prod_{i=1}^\infty\big(1 - Q_{mn}q^{i-1+j}\big)^{\sigma_m}
 \cdot\prod_{m>n}\prod_{i=1}^\infty\big(1 - Q_{nm}q^{i-j}\big)^{-\sigma_m} \label{hj-sigma=+1}
\end{gather}
and
\begin{gather*}
 r_j = \prod_{m<n}\big(1 - Q_{mn}q^{j-1}\big)^{-\sigma_m} \cdot\prod_{m>n}\big(1 - Q_{nm}q^{1-j}\big)^{-\sigma_m}.
\end{gather*}
\item[(b)] In the case where $\sigma_n = -1$,
\begin{gather}
 h_j = \exp\left(- \sum_{m<n}\sigma_m\sum_{k=1}^\infty \frac{Q_{mn}^kq^k}{k(1-q^k)}q^{-kj} + \sum_{m>n}\sigma_m\sum_{k=1}^\infty
 \frac{Q_{nm}^k}{k(1-q^k)}q^{kj}\right) \notag\\
\hphantom{h_j}{} = \prod_{m<n}\exp\left(- \sigma_m\sum_{i,k=1}\frac{(Q_{mn}q^{i-j})^k}{k}
 \right)\cdot
 \prod_{m>n}\exp\left(\sigma_m\sum_{i,k=1}^\infty\frac{(Q_{nm}q^{i-1+j})^k}{k} \right) \notag\\
\hphantom{h_j}{}= \prod_{m<n}\prod_{i=1}^\infty\big(1 - Q_{mn}q^{i-j}\big)^{\sigma_m}
 \cdot\prod_{m>n}\prod_{i=1}^\infty\big(1 - Q_{nm}q^{i-1+j}\big)^{-\sigma_m} \label{hj-sigma=-1}
\end{gather}
and
\begin{gather*}
 r_j = \prod_{m<n}\big(1 - Q_{mn}q^{1-j}\big)^{\sigma_m} \cdot\prod_{m>n}\big(1 - Q_{nm}q^{j-1}\big)^{\sigma_m}.
\end{gather*}
\end{itemize}
This result shows that the tau functions $\tau_n(\bsx)$ are examples of Orlov and Scherbin's $q$-hypergeo\-met\-ric tau functions \cite{OS01a,OS01b}.
\end{Remark}

\subsection{Admissible basis and Kac--Schwarz operators}

(\ref{taun-strip}) is a somewhat complicated, but rather straightforward generalization of the tau function~(\ref{tau-conifold}) for the resolved conifold. One can obtain the generating operator $G$ of an admissible basis of~$W$ and the associated $q$-dif\/ference Kac--Schwarz operators~$A$,~$B$ in much the same way as the case of the resolved conifold.

One can derive the generating operator $G$ from $g = h\Gamma_{-}(q^{-\rho})$ by replacing
\begin{gather*}
 V^{(k)}_0 \to q^{kD}, \qquad \Gamma_{-}(q^{-\rho}) \to \prod_{i=1}^\infty\big(1 - q^{i-1/2}x\big)^{-1}.
\end{gather*}
Thus the generating operator $G$, like (\ref{G-conifold}), takes the factorized form
\begin{gather}
 G = H\cdot\prod_{i=1}^\infty\big(1 - q^{i-1/2}x\big)^{-1}, \label{G-strip}
\end{gather}
where $H$ is the following avatar of $h$:
\begin{itemize}\itemsep=0pt
\item[(a)]In the case where $\sigma_n = +1$,
\begin{gather}
 H = \prod_{m<n}\prod_{i=1}^\infty\big(1 - Q_{mn}q^{i-1+D}\big)^{\sigma_m}\cdot
 \prod_{m>n}\prod_{i=1}^\infty\big(1 - Q_{nm}q^{i-D}\big)^{-\sigma_m}. \label{H-sigma=+1}
\end{gather}
\item[(b)]In the case where $\sigma_n = -1$,
\begin{gather}
 H = \prod_{m<n}\prod_{i=1}^\infty\big(1 - Q_{mn}q^{i-D}\big)^{\sigma_m}\cdot
 \prod_{m>n}\prod_{i=1}^\infty\big(1 - Q_{nm}q^{i-1+D}\big)^{-\sigma_m}. \label{H-sigma=-1}
\end{gather}
\end{itemize}
Note that (\ref{H-sigma=+1}) and (\ref{H-sigma=-1}) have the same structure as the expressions (\ref{hj-sigma=+1}) and (\ref{hj-sigma=-1}) of the parameters $h_j$ of the operator~$h$.

Let $\{\Phi_j(x)\}_{j=0}^\infty$ denote the admissible basis generated by~$G$:
\begin{gather*}
 \Phi_j(x)
 = H\left(x^{-j}\prod_{i=1}^\infty\big(1 - q^{i-1/2}x\big)^{-1}\right) = \sum_{k=1}^\infty\frac{q^{k/2}}{(q;q)_k}Hx^{k-j}.
\end{gather*}
Rewriting this expression further, we obtain the following result.

\begin{Theorem}
The basis elements $\Phi_j(x)$ are functions of the following form:
\begin{itemize}\itemsep=0pt
\item[{\rm (a)}]In the case where $\sigma_n = +1$,
\begin{gather*}
 \Phi_j(x) = \prod_{m<n}\prod_{i=1}^\infty\big(1 - Q_{mn}q^{i-1-j}\big)^{\sigma_m}\cdot \prod_{m>n}\prod_{i=1}^\infty\big(1 - Q_{nm}q^{i+j}\big)^{-\sigma_m}\notag\\
\hphantom{\Phi_j(x)}{}\times \sum_{k=0}^\infty
 \frac{\prod\limits_{m<n}(Q_{mn}q^{-j};q)_k^{-\sigma_m} \prod\limits_{m>n}(Q_{nm}q^j;q^{-1})_k^{-\sigma_m}q^{k/2}}{(q;q)_k}x^{k-j}.
\end{gather*}
\item[{\rm (b)}]In the case where $\sigma_n = -1$,
\begin{gather*}
 \Phi_j(x) = \prod_{m<n}\prod_{i=1}^\infty\big(1 - Q_{mn}q^{i+j}\big)^{\sigma_m}\cdot
 \prod_{m>n}\prod_{i=1}^\infty\big(1 - Q_{nm}q^{i-1-j}\big)^{-\sigma_m}\notag\\
\hphantom{\Phi_j(x)}{}\times \sum_{k=0}^\infty
 \frac{\prod\limits_{m<n}(Q_{mn}q^j;q^{-1})_k^{\sigma_m} \prod\limits_{m>n}(Q_{nm}q^{-j};q)_k^{\sigma_m}q^{k/2}}{(q;q)_k}x^{k-j}.
\end{gather*}
\end{itemize}
\end{Theorem}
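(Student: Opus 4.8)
The plan is to compute $\Phi_j(x)=Gx^{-j}$ directly from the factorized form \eqref{G-strip} of $G$ together with the $\CC^3$ result \eqref{Phij-CP3}. First I would record that, by \eqref{G-strip} and \eqref{Phij-CP3},
\begin{gather*}
 \Phi_j(x) = H\left(x^{-j}\prod_{i=1}^\infty\big(1-q^{i-1/2}x\big)^{-1}\right) = \sum_{k=0}^\infty\frac{q^{k/2}}{(q;q)_k}Hx^{k-j},
\end{gather*}
so that everything reduces to evaluating $Hx^{k-j}$. Since $H$ in \eqref{H-sigma=+1} and \eqref{H-sigma=-1} is a function of $q^D$ alone, and $q^Dx^m=q^mx^m$, the operator $H$ acts diagonally on each monomial: $Hx^{k-j}$ is obtained by substituting $D\to k-j$ in every factor of $H$. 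Thus in case (a) one gets
\begin{gather*}
 Hx^{k-j} = \prod_{m<n}\prod_{i=1}^\infty\big(1-Q_{mn}q^{i-1+k-j}\big)^{\sigma_m} \cdot\prod_{m>n}\prod_{i=1}^\infty\big(1-Q_{nm}q^{i-k+j}\big)^{-\sigma_m}\cdot x^{k-j}.
\end{gather*}

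The heart of the computation is to separate, in each of these infinite products over $i$, the $k$-independent part (which becomes the overall prefactor) from the $k$-dependent part (which becomes a $q$-Pochhammer symbol inside the sum). Two elementary identities do the job: for an ascending shift,
\begin{gather*}
 \prod_{i=1}^\infty\big(1-aq^{k+i-1}\big) = \frac{(a;q)_\infty}{(a;q)_k},
\end{gather*}
and for a descending shift,
\begin{gather*}
 \prod_{i=1}^\infty\big(1-bq^{i-k}\big) = (b;q^{-1})_k\prod_{i=1}^\infty\big(1-bq^i\big),
\end{gather*}
both proved by reindexing the product and counting the finitely many extra factors. Applying the first identity with $a=Q_{mn}q^{-j}$ to the $m<n$ block (whose shift $q^{i-1+D}$ is ascending) and the second with $b=Q_{nm}q^j$ to the $m>n$ block (whose shift $q^{i-D}$ is descending), the $k$-independent infinite products reproduce exactly the prefactors $\prod_{m<n}\prod_i(1-Q_{mn}q^{i-1-j})^{\sigma_m}$ and $\prod_{m>n}\prod_i(1-Q_{nm}q^{i+j})^{-\sigma_m}$ asserted in (a), while the residual $k$-dependent factors are $(Q_{mn}q^{-j};q)_k^{-\sigma_m}$ and $(Q_{nm}q^j;q^{-1})_k^{-\sigma_m}$.

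Pulling the $k$-independent prefactors outside the sum over $k$ then yields the stated formula in case (a), the surviving summand being $\prod_{m<n}(Q_{mn}q^{-j};q)_k^{-\sigma_m}\prod_{m>n}(Q_{nm}q^j;q^{-1})_k^{-\sigma_m}q^{k/2}/(q;q)_k$. Case (b), with $\sigma_n=-1$, is entirely parallel: the only change is that \eqref{H-sigma=-1} routes the descending shift $q^{i-D}$ to the $m<n$ block and the ascending shift $q^{i-1+D}$ to the $m>n$ block, so the roles of the two Pochhammer bases $q$ and $q^{-1}$ are interchanged, giving the formula in (b). I expect the only real care to be bookkeeping these two bases correctly---tracking which shift produces $(\,\cdot\,;q)_k$ and which produces $(\,\cdot\,;q^{-1})_k$---since the algebraic manipulation of the products is otherwise routine; all rearrangements are legitimate because the expressions are formal power series in the K\"ahler parameters $Q_m$.
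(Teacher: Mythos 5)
Your proposal is correct and follows exactly the route the paper takes: the paper reduces $\Phi_j(x)$ to $\sum_{k}\frac{q^{k/2}}{(q;q)_k}Hx^{k-j}$ and then says only ``rewriting this expression further,'' and your two reindexing identities $\prod_{i\ge 1}(1-aq^{k+i-1})=(a;q)_\infty/(a;q)_k$ and $\prod_{i\ge 1}(1-bq^{i-k})=(b;q^{-1})_k\prod_{i\ge 1}(1-bq^i)$ are precisely the omitted bookkeeping, applied with the correct bases in each block. (Minor point in your favor: the paper's intermediate display has $\sum_{k=1}^\infty$ where it should read $\sum_{k=0}^\infty$, consistent with your version and with the $\CC^3$ case.)
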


As remarked in the case of the resolved conifold, see~(\ref{Phij-conifold}), the essential part of $\Phi_j(x)$'s are $q$-hypergeometric functions. In particular, $\Phi_0(x)$ agrees, up to a constant multiplier, with the generating function
\begin{gather}
 \Psi(x) = \sum_{k=0}^\infty x^k Z_{n,(1^k)}/Z
 \label{Psi-strip}
\end{gather}
of the normalized open string amplitudes considered in our previous work~\cite{Takasaki13}.

Let us proceed to computation of the $q$-dif\/ference Kac--Schwarz operators~$A$,~$B$. Since~$G$ has the partially factorized form~(\ref{G-strip}) and the second factor has the property
\begin{gather*}
 \prod_{i=1}^\infty\big(1 - q^{i-1/}x\big)^{-1}\cdot q^{-D} \cdot\prod_{i=1}^\infty\big(1 - q^{i-1/2}x\big) = q^{-D}\big(1 - q^{1/2}x\big),
\end{gather*}
see (\ref{A-CP3}), we have the following preliminary expressions of $A$ and $B$:
\begin{gather*}
 A = q^{-D}H\cdot \big(1 - q^{1/2}x\big)\cdot H^{-1},\qquad B = H\cdot x^{-1}\cdot H^{-1}.
\end{gather*}
Thus our main task is to compute $H\cdot x^{\pm 1}\cdot H^{-1}$.

\begin{Lemma}
\begin{gather*}
 H\cdot x\cdot H^{-1} = xR,
\end{gather*}
where
\begin{gather}
 R = \prod_{m<n}\big(1 - Q_{mn}q^{\sigma_n D}\big)^{-\sigma_m\sigma_n} \cdot\prod_{m>n}\big(1 - Q_{nm}q^{-\sigma_n D}\big)^{-\sigma_n\sigma_m}. \label{R-def}
\end{gather}
\end{Lemma}

\begin{proof}
By the identity (\ref{x-conj}), one can move $x$ to the left.
In the case where $\sigma_n = +1$, this computation proceeds
as follows:
\begin{gather*}
 H\cdot x\cdot H^{-1} = x \prod_{m<n}\prod_{i=1}^\infty\big(1 - Q_{mn}q^{i+D}\big)^{\sigma_m}
 \cdot\prod_{m>n}\prod_{i=1}^\infty\big(1 - Q_{nm}q^{i-1-D}\big)^{-\sigma_m}\\
\hphantom{H\cdot x\cdot H^{-1} =} {}\times
 \prod_{m>n}\prod_{i=1}^\infty\big(1 - Q_{nm}q^{i-D}\big)^{\sigma_m} \cdot\prod_{m<n}\prod_{i=1}^\infty\big(1 - Q_{mn}q^{i-1+D}\big)^{-\sigma_m}\\
\hphantom{H\cdot x\cdot H^{-1}} {}= x \prod_{m<n}\big(1 - Q_{mn}q^D\big)^{-\sigma_m} \cdot\prod_{m>n}\big(1 - Q_{nm}q^{-D}\big)^{-\sigma_m}.
\end{gather*}
The case where $\sigma_n = -1$ can be treated in the same way:
\begin{gather*}
 H\cdot x\cdot H^{-1} = x \prod_{m<n}\prod_{i=1}^\infty\big(1 - Q_{mn}q^{i-1-D}\big)^{\sigma_m}
 \cdot\prod_{m>n}\prod_{i=1}^\infty\big(1 - Q_{nm}q^{i+D}\big)^{-\sigma_m}\\
\hphantom{H\cdot x\cdot H^{-1} =}{}\times
 \prod_{m>n}\prod_{i=1}^\infty\big(1 - Q_{nm}q^{i-1+D}\big)^{\sigma_m} \cdot\prod_{m<n}\prod_{i=1}^\infty\big(1 - Q_{mn}q^{i-D}\big)^{-\sigma_m}\\
\hphantom{H\cdot x\cdot H^{-1}}{}= x \prod_{m<n}\big(1 - Q_{mn}q^{-D}\big)^{\sigma_m} \cdot\prod_{m>n}\big(1 - Q_{nm}q^D\big)^{\sigma_m}.\tag*{\qed}
\end{gather*}\renewcommand{\qed}{}
\end{proof}

By this lemma applied to the preliminary expression, we f\/ind the following explicit form of~$A$ and~$B$.

\begin{Theorem}The $q$-difference Kac--Schwarz operators can be expressed as
\begin{gather*}
 A = q^{-D}\big(1 - q^{1/2}xR\big),\qquad B = (xR)^{-1} = R^{-1}\cdot x^{-1},
\end{gather*}
where $R$ is the operator defined in~\eqref{R-def}.
\end{Theorem}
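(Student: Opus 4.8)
The plan is to substitute the preliminary expressions for $A$ and $B$ obtained just above the Lemma and then invoke the Lemma directly, so that essentially no fresh computation is needed. Recall that the partially factorized form~\eqref{G-strip} of $G$, combined with the identity~\eqref{A-CP3} for the $\CC^3$ factor, has already reduced the operators to
\begin{gather*}
 A = q^{-D}H\cdot\big(1 - q^{1/2}x\big)\cdot H^{-1},\qquad B = H\cdot x^{-1}\cdot H^{-1}.
\end{gather*}
The sole nontrivial ingredient is the conjugation of $x$ by $H$, and the Lemma supplies exactly this as $H\cdot x\cdot H^{-1} = xR$ with $R$ given by~\eqref{R-def}.

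First I would treat $A$. Because $H$ is invertible, conjugation by $H$ is linear and fixes the constant operator $1$, so expanding the middle factor gives
\begin{gather*}
 H\cdot\big(1 - q^{1/2}x\big)\cdot H^{-1} = 1 - q^{1/2}\,H\cdot x\cdot H^{-1} = 1 - q^{1/2}xR.
\end{gather*}
Multiplying on the left by $q^{-D}$ yields $A = q^{-D}\big(1 - q^{1/2}xR\big)$, as claimed.

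For $B$ I would simply invert the relation of the Lemma. Conjugation by the invertible operator $H$ is an algebra automorphism, hence it carries $x^{-1}$ to the inverse of $H\cdot x\cdot H^{-1}$; explicitly,
\begin{gather*}
 B = H\cdot x^{-1}\cdot H^{-1} = \big(H\cdot x\cdot H^{-1}\big)^{-1} = (xR)^{-1} = R^{-1}\cdot x^{-1}.
\end{gather*}
There is no genuine obstacle here: all the work has been front-loaded into the Lemma, and the only point demanding care is that $x$ and $R$ (equivalently $x$ and $D$) do not commute, so one must preserve the order of the factors and record the inverse of the product as $(xR)^{-1}=R^{-1}x^{-1}$ rather than $x^{-1}R^{-1}$.
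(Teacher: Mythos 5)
Your proposal is correct and follows exactly the route the paper takes: the paper's entire proof is the single remark that the theorem follows ``by this lemma applied to the preliminary expression,'' and you have simply written out the two one-line conjugation computations that remark implies. Your explicit caution about the non-commutativity of $x$ and $R$ when writing $(xR)^{-1}=R^{-1}x^{-1}$ is a correct and worthwhile detail, but it does not change the argument.
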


This is a generalization of the result for the resolved conifold, see (\ref{A-conifold}) and (\ref{B-conifold}), for which $R = 1 - Qq^D$. The $q$-dif\/ference equation (\ref{Phi0-eq-conifold}) for $\Phi_0(x)$ is generalized as
\begin{gather}
 \big(1 - q^{1/2}xR\big)\Phi_0(x) = \Phi_0(qx). \label{Phi0-eq-strip}
\end{gather}
$R$ can be expressed as
\begin{gather*}
 R = \frac{Q(q^D)}{P(q^D)},
\end{gather*}
where $P(q^D)$ and $Q(q^D)$ are Laurent polynomials $q^D$:
\begin{gather*}
 P(q^D) = \prod_{m<n,\sigma_m\sigma_n>0}\big(1 - Q_{mn}q^{\sigma_nD}\big) \cdot\prod_{m>n,\sigma_m\sigma_n>0}\big(1 - Q_{nm}q^{-\sigma_nD}\big),\\
 Q(q^D) = \prod_{m<n,\sigma_m\sigma_n<0}\big(1 - Q_{mn}q^{\sigma_nD}\big) \cdot\prod_{m>n,\sigma_m\sigma_n<0}\big(1 - Q_{nm}q^{-\sigma_nD}\big).
\end{gather*}
(\ref{Phi0-eq-strip}) agrees with the $q$-dif\/ference equation derived for the generating function (\ref{Psi-strip}) in our previous work \cite{Takasaki13}.

\subsection{Generating functions for non-vertical legs}

Let us now consider the generating functions
\begin{gather*}
 \tau_0(\bsx) = \sum_{\alpha_0\in\calP}s_{\tp{\alpha}_0}(\bsx)Z_{0,\alpha_0}/Z,\qquad Z_{0,\alpha_0} = Z^{\alpha_0\varnothing}_{\varnothing\cdots\varnothing},\\
 \tau_{N+1}(\bsx) = \sum_{\alpha_N\in\calP}s_{\tp{\alpha}_N}(\bsx)Z_{N+1,\alpha_N}/Z,\qquad
 Z_{N+1,\alpha_N} = Z^{\varnothing\alpha_N}_{\varnothing\cdots\varnothing},
\end{gather*}
of the normalized open string amplitudes for the leftmost and rightmost legs. As it turns out, their structure is quite distinct from the foregoing generating functions $\tau_1(\bsx),\ldots,\tau_N(\bsx)$.

The unnormalized amplitudes are a specialization of (\ref{Z-strip-gen}):
\begin{gather}
 Z_{0,\alpha_0} = q^{(1-\sigma_1)\kappa(\alpha_0)/4} \big\langle\tp{\alpha}_0\,|\,\Gamma^{\sigma_1}_{-}\big(q^{-\rho}\big)
 \Gamma^{\sigma_1}_{+}\big(q^{-\rho}\big)(\sigma_1Q_1\sigma_2)^{L_0} \cdots \notag\\
\hphantom{Z_{0,\alpha_0} =}{}\times (\sigma_{N-1}Q_{N-1}\sigma_N)^{L_0} \Gamma^{\sigma_N}_{-}\big(q^{-\rho}\big)
 \Gamma^{\sigma_N}_{+}\big(q^{-\rho}\big)\,|\,0\big\rangle, \label{Z(0)}
\\
 Z_{N+1,\alpha_N} = q^{(1+\sigma_N)\kappa(\alpha_N)/4} \big\langle 0\,|\,\Gamma^{\sigma_1}_{-}\big(q^{-\rho}\big)
 \Gamma^{\sigma_1}_{+}\big(q^{-\rho}\big)(\sigma_1Q_1\sigma_2)^{L_0} \cdots \notag\\
\hphantom{Z_{N+1,\alpha_N} =}{}\times (\sigma_{N-1}Q_{N-1}\sigma_N)^{L_0} \Gamma^{\sigma_N}_{-}\big(q^{-\rho}\big)
 \Gamma^{\sigma_N}_{+}(q^{-\rho})|\alpha_N\rangle. \label{Z(N+1)}
\end{gather}
These amplitudes can be converted to a reduced form by the following procedure:
\begin{itemize}\itemsep=0pt
\item[1.] $\kappa(\lambda)$ is the matrix elements of the diagonal operator
\begin{gather*}
 K = \sum_{n\in\ZZ}(n-1/2)^2{:}\psi_{-n}\psi^*_n{:}
\end{gather*}
on the fermionic Fock space in the sense that
\begin{gather*}
 \kappa(\lambda) = \langle\lambda|K|\lambda\rangle.
\end{gather*}
Consequently, the $c$-number prefactors $q^{(1-\sigma_1)\kappa(\alpha_0)/4}$ and $q^{(1+\sigma_N)\kappa(\alpha_N)/4}$ can be moved inside $\langle\tp{\alpha}_0|$ and $|\alpha_N\rangle$ as the operators $q^{-(1-\sigma_1)K/4}$ and $q^{(1+\sigma_N)K/4}$. Note that the sign of the exponent
of the f\/irst factor is f\/lipped because of the identity $\kappa(\tp{\alpha}_0) = - \kappa(\alpha_0)$.
\item[2.]
Operators of the form $Q^{L_0}$ in (\ref{Z(0)}) can be moved to the right until they hit the vacuum vector and disappear. Those in (\ref{Z(N+1)}) can be eliminated in the same way. In the course of this procedure, the vertex operators are modif\/ied as
\begin{gather*}
 Q^{L_0}\Gamma^\sigma_{-}(\bsx) = \Gamma^\sigma_{-}(Q\bsx)Q^{L_0},\qquad \Gamma^\sigma_{+}(\bsx)Q^{L_0} = Q^{L_0}\Gamma^\sigma_{+}(Q\bsx).
\end{gather*}
\item[3.]
$\Gamma^\sigma_{-}$'s in (\ref{Z(0)}) and $\Gamma^\sigma_{+}$'s in (\ref{Z(N+1)}) can be moved until they hit the vacuum vectors and disappear. Exchanging the order of vertex operators yields $c$-number factors as the commutation relations~(\ref{Gamma-com-rel}) show. These $c$-number factors do not depend on $\alpha_,\alpha_N$ and cancel out in the normalized amplitudes.
\end{itemize}
Thus the following reduced expressions of the normalized amplitudes are obtained:
\begin{gather*}
 Z_{0,\alpha_0}/Z = \left\langle\tp{\alpha}_0\,|\,q^{-(1-\sigma_1)K/4}\Gamma^{\sigma_1}_{-}\big(q^{-\rho}\big)
 \prod_{n=2}^N\Gamma^{\sigma_n}_{-}\big(\sigma_1Q_{1n}\sigma_nq^{-\rho}\big) \,|\,0\right\rangle,\\
 Z_{N+1,\alpha_N}/Z= \left\langle 0\,|\prod_{n=1}^{N-1}\Gamma^{\sigma_n}_{+}\big(\sigma_nQ_{nN}\sigma_Nq^{-\rho}\big)
 \cdot\Gamma^{\sigma_N}_{+}\big(q^{-\rho}\big)q^{(1+\sigma_N)K/4}\,|\,\alpha_N\right\rangle.
\end{gather*}
This implies that the generating functions $\tau_0(\bsx)$, $\tau_{N+1}(\bsx)$ of these normalized amplitudes are tau functions of the KP hierarchy:

\begin{Theorem}$\tau_0(\bsx)$ and $\tau_{N+1}(\bsx)$ are tau functions of the KP hierarchy with the following fermionic expression:
\begin{gather}
 \tau_0(\bsx) = \left\langle 0\,|\,\Gamma_{+}(\bsx)q^{-(1-\sigma_1)K/4}\Gamma^{\sigma_1}_{-}\big(q^{-\rho}\big)
 \prod_{n=2}^N\Gamma^{\sigma_n}_{-}\big(\sigma_1Q_{1n}\sigma_nq^{-\rho}\big)\,|\,0\right\rangle, \label{tau0-strip}\\
 \tau_{N+1}(\bsx) = \left\langle 0\,|\prod_{n=1}^{N-1}\Gamma^{\sigma_n}_{+}\big(\sigma_nQ_{nN}\sigma_Nq^{-\rho}\big)
 \cdot\Gamma^{\sigma_N}_{+}\big(q^{-\rho}\big)q^{(1+\sigma_N)K/4} \Gamma'_{-}(\bsx)\,|\,0\right\rangle. \label{tauN+1-strip}
\end{gather}
\end{Theorem}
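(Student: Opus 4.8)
The plan is to insert the two reduced expressions for $Z_{0,\alpha_0}/Z$ and $Z_{N+1,\alpha_N}/Z$ obtained just above into the defining sums of $\tau_0(\bsx)$ and $\tau_{N+1}(\bsx)$, and then collapse each sum over $\calP$ by the completeness relation $\sum_{\lambda\in\calP}|\lambda\rangle\langle\lambda|=1$ in the charge-$0$ sector. The one choice that must be made correctly is which vertex-operator realization of the Schur prefactor $s_{\tp{\alpha}}(\bsx)$ to use, so that the partition carried by the prefactor matches the one carried by the reduced amplitude; this single bookkeeping decision is what produces the two distinct final forms.

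For $\tau_0$ I would write $s_{\tp{\alpha}_0}(\bsx)=\langle 0|\Gamma_{+}(\bsx)|\tp{\alpha}_0\rangle$ (the identity recalled in the $\CC^3$ example, taken at $\lambda=\tp{\alpha}_0$). The reduced amplitude is $\langle\tp{\alpha}_0|g_0|0\rangle$ with $g_0=q^{-(1-\sigma_1)K/4}\Gamma^{\sigma_1}_{-}(q^{-\rho})\prod_{n=2}^{N}\Gamma^{\sigma_n}_{-}(\sigma_1Q_{1n}\sigma_nq^{-\rho})$, so the product of the two factors supplies the matching ket $|\tp{\alpha}_0\rangle$ and bra $\langle\tp{\alpha}_0|$. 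Reindexing the sum over $\alpha_0\in\calP$ by the involution $\alpha_0\mapsto\tp{\alpha}_0$ and using completeness yields at once $\tau_0(\bsx)=\langle 0|\Gamma_{+}(\bsx)g_0|0\rangle$, which is \eqref{tau0-strip}.

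For $\tau_{N+1}$ the reduced amplitude $\langle 0|g_{N+1}|\alpha_N\rangle$ carries $|\alpha_N\rangle$ rather than its conjugate, so the previous identity no longer matches. Here I would instead bring in the second-type vertex operator through $s_{\tp{\alpha}_N}(\bsx)=\langle\alpha_N|\Gamma'_{-}(\bsx)|0\rangle$, valid because $\Gamma'_{-}(\bsx)|0\rangle=\sum_{\lambda\in\calP}s_{\tp{\lambda}}(\bsx)|\lambda\rangle$. Pairing this with $\langle 0|g_{N+1}|\alpha_N\rangle$, where $g_{N+1}=\prod_{n=1}^{N-1}\Gamma^{\sigma_n}_{+}(\sigma_nQ_{nN}\sigma_Nq^{-\rho})\cdot\Gamma^{\sigma_N}_{+}(q^{-\rho})q^{(1+\sigma_N)K/4}$, and summing over $\alpha_N$ by completeness gives $\tau_{N+1}(\bsx)=\langle 0|g_{N+1}\Gamma'_{-}(\bsx)|0\rangle$, i.e.\ \eqref{tauN+1-strip}. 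That $\Gamma'_{-}$ is forced upon us here, while $\Gamma_{+}$ sufficed for $\tau_0$, is exactly the structural asymmetry between the two non-vertical legs.

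To conclude that both are tau functions of the KP hierarchy I would invoke the standard fermionic correspondence \cite{MJD-book,SS83,SW85}: every $\bsx$-independent factor ($\Gamma^{\sigma}_{\pm}$ and $q^{cK/4}$) is the exponential of a fermion bilinear and hence a group-like element of $GL(\infty)$, so $g_0$ and $g_{N+1}$ determine points of the Sato Grassmannian. For $\tau_0$ the expression is already in the canonical shape $\langle 0|\Gamma_{+}(\bsx)g_0|0\rangle$ and is therefore a KP tau function outright. The step needing the most care is $\tau_{N+1}$, where the $\bsx$-dependence enters through $\Gamma'_{-}$ rather than $\Gamma_{+}$: expanding in Schur functions gives $\tau_{N+1}(\bsx)=\sum_{\mu\in\calP}s_\mu(\bsx)\langle 0|g_{N+1}|\tp{\mu}\rangle$, so its Schur coefficients are the conjugate-relabelled Plücker coordinates of the point fixed by $g_{N+1}$. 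Since conjugation of all partitions is a reflection of Maya diagrams, it is a symmetry of the Sato Grassmannian and sends Plücker coordinates to Plücker coordinates; hence $\{\langle 0|g_{N+1}|\tp{\mu}\rangle\}_{\mu}$ are again Plücker coordinates and $\tau_{N+1}$ is a KP tau function. Establishing this closure property cleanly, rather than merely asserting it, is the part I expect to demand real attention.
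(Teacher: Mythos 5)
Your proposal is correct and takes essentially the same route as the paper: the theorem is obtained there exactly by inserting the reduced bra--ket expressions for $Z_{0,\alpha_0}/Z$ and $Z_{N+1,\alpha_N}/Z$ into the defining sums and collapsing them with the vertex-operator realizations of the Schur functions, with $\Gamma_{+}(\bsx)$ matching the conjugated ket on the left leg and $\Gamma'_{-}(\bsx)$ forced on the right leg, just as you describe. The closure issue you flag at the end (that conjugating all partitions preserves Plücker coordinates) is the content of the paper's Remark immediately after the theorem, which rewrites $\tau_{N+1}(\bsx)$ in the canonical form $\big\langle 0\,|\,\Gamma_{+}(\bsx)\,g\,|\,0\big\rangle$ by exactly that partition-conjugation symmetry at the operator level ($\Gamma^{\sigma}_{\pm}\mapsto\Gamma^{-\sigma}_{\mp}$), so your abstract Plücker argument and the paper's explicit operator identity are two phrasings of the same final step.
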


\begin{Remark}
One can rewrite the fermionic expression of $\tau_{N+1}(\bsx)$ as
\begin{gather}
 \tau_{N+1}(\bsx)  = \left\langle 0\,|\,\Gamma_{+}(\bsx)q^{-(1+\sigma_N)K/4} \Gamma^{-\sigma_N}_{-}\big(q^{-\rho}\big)\prod_{n=1}^{N-1}
 \Gamma^{-\sigma_n}_{-}\big(\sigma_NQ_{Nn}\sigma_{n}q^{-\rho}\big) \,|\,0\right\rangle, \label{tauN+1-strip2}
\end{gather}
where $Q_{nm}$ for $n > m$ denotes $Q_nQ_{n-1}\cdots Q_{m-1}$. This shows that $\tau_{N+1}(\bsx)$ may be thought of as the generating function for the leftmost leg of the web diagram rotated 180 degrees.
\end{Remark}

Although the structure of (\ref{tau0-strip}), (\ref{tauN+1-strip}) and (\ref{tauN+1-strip2}) are distinct from the tau functions~(\ref{taun-strip}) for the vertical legs, $q$-dif\/ference Kac--Schwarz operators can be derived in much the same way. We here present the result for $\tau_0(\bsx)$ only. $\tau_{N+1}(\bsx)$ can be treated in the same way.

The generating operator $G$ for $\tau_0(\bsx)$ can be obtained as an avatar of the operator
\begin{gather*}
 g = q^{-(1-\sigma_1)K/4}\Gamma^{\sigma_1}_{-}\big(q^{-\rho}\big) \prod_{n=2}^N\Gamma^{\sigma_n}_{-}\big(\sigma_1Q_{1n}\sigma_nq^{-\rho}\big)
\end{gather*}
on the fermionic Fock space. In view of the correspondence
\begin{gather*}
 K \longleftrightarrow (\Delta - 1/2)^2 \longleftrightarrow (D - 1/2)^2,
\end{gather*}
the f\/irst factor $q^{-(1-\sigma_1)K/4}$ corresponds to the operator $q^{-(1-\sigma_1)(D-1/2)^2/4}$ on $V$. On the other hand, just as $\Gamma_{-}(q^{-\rho})$ is translated to~(\ref{G-CP3}), vertex operators of the form $\Gamma^\sigma_{-}(Qq^{-\rho})$ corresponds to the multiplication operators
\begin{gather*}
 \exp\left(\sigma\sum_{k=1}^\infty  \frac{(\sigma Qq^{1/2})^k}{k(1-q^k)}x^k\right) = \prod_{i=1}^\infty\big(1 - \sigma Qq^{i-1/2}\big)^{-\sigma}.
\end{gather*}
Thus $G$ turns out to take the following form:
\begin{gather*}
 G = q^{-(1-\sigma_1)(D-1/2)^2/4} \prod_{i=1}^\infty \big(1 - \sigma_1q^{i-1/2}x\big)^{-\sigma_1}\cdot \prod_{n=2}^N\prod_{i=1}^\infty
 \big(1 - \sigma_1Q_{1n}q^{i-1/2}x\big)^{-\sigma_n}.
\end{gather*}

The associated admissible basis of $W$ consists of the following functions:
\begin{itemize}\itemsep=0pt
\item[(a)]In the case where $\sigma_1 = +1$,
\begin{gather*}
 \Phi_j(x) = x^{-j}\prod_{i=1}^\infty \big(1 - q^{i-1/2}x\big)^{-1}\cdot \prod_{n=2}^N\prod_{i=1}^\infty \big(1 - Q_{1n}q^{i-1/2}x\big)^{-\sigma_n},
\end{gather*}
\item[(b)]in the case where $\sigma_1 = - 1$,
\begin{gather*}
 \Phi_j(x) = q^{-(D-1/2)^2/2} \left(x^{-j}\prod_{i=1}^\infty \big(1 + q^{i-1/2}x\big)\cdot
 \prod_{n=2}^N\prod_{i=1}^\infty \big(1 + Q_{1n}q^{i-1/2}x\big)^{-\sigma_n}\right).
\end{gather*}
\end{itemize}

These results show that the structure of the generating operator and the admissible basis, too, is entirely dif\/ferent from the case of vertical legs shown in~(\ref{G-strip}). In particular, the essential part of $\Phi_j(x)$'s are composite quantum dilogarithmic functions. In this sense, the situation is rather similar to the case of~$\CC^3$, see~(\ref{Phij-CP3}).

Computation of the $q$-dif\/ference Kac--Schwarz operators is more or less parallel to the case of~$\CC^3$ except that the avatar $q^{(D-1/2)^2/2}$ of $q^{K/2}$ joins the game.

\begin{Theorem}The $q$-difference Kac--Schwarz operators $A$, $B$ for $\tau_0(\bsx)$ can be expressed as
\begin{gather*}
 A = q^{-D}\big(1 - q^{1/2}x\big)\prod_{n=2}^N\big(1 - Q_{1n}q^{1/2}x\big)^{\sigma_n},\qquad B = x^{-1}
\end{gather*}
in the case where $\sigma_1 = +1$, and
\begin{gather*}
 A = q^{-D}\big(1 + q^{1/2}xq^{-D}\big)^{-1} \prod_{n=2}^N\big(1 + Q_{1n}q^{1/2}xq^{-D}\big)^{\sigma_n},\qquad B = \big(xq^{-D}\big)^{-1} = q^Dx^{-1}
\end{gather*}
in the case where $\sigma_1 = -1$.
\end{Theorem}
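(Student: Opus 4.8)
The plan is to compute $A = G\cdot q^{-D}\cdot G^{-1}$ and $B = G\cdot x^{-1}\cdot G^{-1}$ directly from the explicit form of $G$ just obtained, exactly as in the $\CC^3$ case but carrying the extra $q^{(D-1/2)^2/2}$ factor when $\sigma_1 = -1$. First I would treat the case $\sigma_1 = +1$. Here $G = \prod_{i=1}^\infty(1 - q^{i-1/2}x)^{-1}\cdot\prod_{n=2}^N\prod_{i=1}^\infty(1 - Q_{1n}q^{i-1/2}x)^{-\sigma_n}$ is purely multiplicative. Since all factors are functions of $x$ alone, $B = G\cdot x^{-1}\cdot G^{-1} = x^{-1}$ is immediate. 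For $A$, I would insert $q^{-D}q^{D}$ and use the conjugation identity \eqref{q^D-conj}, $q^{D}f(x)q^{-D} = f(qx)$, which shifts $x\mapsto qx$ in each multiplier. In the ratio $q^{-D}\,G(qx)^{-1}G(x)$ almost everything telescopes: $\prod_i(1 - q^{i+1/2}x)^{-1}\cdot\prod_i(1 - q^{i-1/2}x)$ collapses to $(1 - q^{1/2}x)$, and likewise the $Q_{1n}$-product collapses to $(1 - Q_{1n}q^{1/2}x)^{\sigma_n}$. This reproduces the stated $A = q^{-D}(1 - q^{1/2}x)\prod_{n=2}^N(1 - Q_{1n}q^{1/2}x)^{\sigma_n}$.

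Next I would handle $\sigma_1 = -1$, where $G = q^{-(D-1/2)^2/2}\,M(x)$ with $M(x) = \prod_i(1 + q^{i-1/2}x)\cdot\prod_{n=2}^N\prod_i(1 + Q_{1n}q^{i-1/2}x)^{-\sigma_n}$ (the signs following from $\sigma_1 = -1$ in the multiplier $\prod_i(1 - \sigma_1 Q q^{i-1/2}x)^{-\sigma_1}$). The new ingredient is the operator $P := q^{-(D-1/2)^2/2}$. The key computational lemma I would need is its conjugation action on the basic operators, namely
\begin{gather*}
 P\,q^{-D}\,P^{-1} = q^{-D}\cdot q^{-D}\quad\text{(up to scalar)},\qquad P\,x\,P^{-1} = x\,q^{-D}\cdot\text{(scalar)}.
\end{gather*}
These follow from $x^{-1}f(D)x = f(D+1)$ (identity \eqref{x-conj}) applied to $f(D) = q^{-(D-1/2)^2/2}$: expanding $(D+1-1/2)^2 - (D-1/2)^2 = 2D$ gives precisely the factor $q^{-D}$ (times a $D$-independent constant). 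This is the step I expect to be the main obstacle, since one must track the constant and the direction of the shift carefully; the occurrence of $xq^{-D}$ rather than $x$ throughout the answer is entirely dictated by this conjugation.

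With these conjugation rules in hand, the rest is mechanical. For $B = G\cdot x^{-1}\cdot G^{-1} = P\,M(x)\,x^{-1}\,M(x)^{-1}\,P^{-1} = P\,x^{-1}\,P^{-1}$ (the multiplicative $M$ commutes with $x^{-1}$), which by the lemma equals $q^{D}x^{-1} = (xq^{-D})^{-1}$. For $A = P\,M(x)\,q^{-D}\,M(x)^{-1}\,P^{-1}$, I would first compute the inner conjugation $M(x)q^{-D}M(x)^{-1} = q^{-D}\,M(qx)^{-1}M(x)$ using \eqref{q^D-conj}, obtaining the telescoped product $q^{-D}(1 + q^{1/2}x)^{-1}\prod_{n=2}^N(1 + Q_{1n}q^{1/2}x)^{\sigma_n}$ (note the inverted first factor, since $M$ carries positive rather than negative powers here). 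Then I would conjugate this whole expression by $P$: each explicit $x$ becomes $xq^{-D}$ and the leading $q^{-D}$ picks up its shift, yielding the claimed $A = q^{-D}(1 + q^{1/2}xq^{-D})^{-1}\prod_{n=2}^N(1 + Q_{1n}q^{1/2}xq^{-D})^{\sigma_n}$. Throughout I would lean on the already-established $\CC^3$ computation \eqref{A-CP3} as the $\sigma_1 = +1$, single-factor prototype, so that the proof reduces to the bookkeeping of the extra legs and the single genuinely new conjugation by $P$.
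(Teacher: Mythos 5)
Your overall strategy is exactly the paper's: for $\sigma_1=+1$ the operator $G$ is purely multiplicative, so $B=x^{-1}$ and $A$ follows from the telescoping of $q^DG(x)q^{-D}=G(qx)$; for $\sigma_1=-1$ one factors $G=q^{-(D-1/2)^2/2}G'$, computes $A'=G'q^{-D}G'^{-1}$ by the same telescoping, and then conjugates by $P=q^{-(D-1/2)^2/2}$ using $Pf(x)P^{-1}=f(xq^{-D})$, which is precisely the operator identity the paper invokes at the end of its proof. Your stated final answers all agree with the theorem.

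However, one of the two conjugation rules you announce as the ``key computational lemma'' is false: $P\,q^{-D}\,P^{-1}\neq q^{-2D}$. Both $P=q^{-(D-1/2)^2/2}$ and $q^{-D}$ are functions of $D$ alone (diagonal on the monomials $x^i$), so they commute and $P\,q^{-D}\,P^{-1}=q^{-D}$ on the nose. Your identity $x^{-1}f(D)x=f(D+1)$ produces the shift $(D+1/2)^2-(D-1/2)^2=2D$ only when $P$ is moved past a power of $x$; that is what gives the correct $P\,x\,P^{-1}=xq^{-D}$ (with scalar exactly $1$, since the constant term of $2D$ vanishes), and it is the \emph{only} nontrivial conjugation needed. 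If you actually applied your first rule, the leading factor of $A$ in the $\sigma_1=-1$ case would come out as $q^{-2D}$ rather than the $q^{-D}$ you (correctly) write, so the phrase ``the leading $q^{-D}$ picks up its shift'' must be deleted: the leading $q^{-D}$ passes through $P$ unchanged, and only the explicit $x$'s are replaced by $xq^{-D}$. A second, harmless slip: the telescoped ratio should be $M(qx)M(x)^{-1}$ (equivalently $G(qx)G(x)^{-1}$), not $M(qx)^{-1}M(x)$; the collapsed products you then display are the correct ones, so this does not propagate. With the first lemma corrected to ``$P$ commutes with $q^{-D}$'' the argument is complete and coincides with the paper's.
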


\begin{proof}
Since the case where $\sigma_1 = +1$ is simpler, let us explain the case where $\sigma_1 = -1$. In this case, $G$ is a product of two operators as
\begin{gather*}
 G = q^{-(D-1/2)^2/2}G',
\end{gather*}
where
\begin{gather*}
 G' = \prod_{i=1}^\infty\big(1 + q^{i-1/2}x\big) \cdot\prod_{n=2}^N\prod_{i=1}^\infty\big(1 + Q_{1n}q^{i-1/2}x\big)^{-\sigma_n}.
\end{gather*}
As an intermediate step, one can compute $A' = G'\cdot q^{-D}\cdot G'^{-1}$ with the aid of~(\ref{q^D-conj}) as
\begin{gather*}
 A' = q^{-D}\prod_{i=1}^\infty\big(1 + q^{i+1/2}x\big) \cdot\prod_{n=2}^N\prod_{i=1}^\infty\big(1 + Q_{1n}q^{i+1/2}x\big)^{-\sigma_n}\\
\hphantom{A' =}{}\times
 \prod_{n=2}^N\prod_{i=1}^\infty\big(1 + Q_{1n}q^{i-1/2}x\big)^{\sigma_n} \cdot\prod_{i=1}^\infty\big(1 + q^{i-1/2}x\big)^{-1}\\
\hphantom{A'}{} = q^{-D}\big(1 + q^{1/2}x\big)^{-1}\prod_{n=2}^N\big(1 + Q_{1n}q^{1/2}x\big)^{\sigma_n}.
\end{gather*}
Thus $A = G\cdot q^{-D}\cdot G^{-1}$ can be expressed as
\begin{gather*}
 A = q^{-(D-1/2)^2/2}\cdot \big(1 + q^{1/2}x\big)^{-1}\prod_{n=2}^N\big(1 + Q_{1n}q^{1/2}x\big)^{\sigma_n}\cdot q^{(D-1/2)^2/2}.
\end{gather*}
This turns into the f\/inal form by the operator identity
\begin{gather*}
 q^{-(D-1/2)^2/2}f(x)q^{(D-1/2)^2/2} = f\big(xq^{-D}\big).\tag*{\qed}
\end{gather*}\renewcommand{\qed}{}
\end{proof}

\section{Closed topological vertex}\label{section4}

Let $Z_{\beta_1\beta_2}$ denote the open string amplitude of closed topological vertex with two stacks of branes as shown in Fig.~\ref{fig3}. $Q_1$, $Q_2$, $Q_3$ are the K\"ahler parameters assigned to the internal lines. The amplitude $Z_{\varnothing\varnothing}$ in the closed sector was computed by Bryan and Karp~\cite{BK03} and Su{\l}kowski~\cite{Sulkowski06}. In our previous work~\cite{TN15}, we computed $Z_{\beta_1,\beta_2}$ and obtained the following fermionic expression:
\begin{gather}
 Z_{\beta_1\beta_2} = q^{\kappa(\beta_2)/2}\prod_{i,j=1}^\infty \big(1 - Q_1Q_2q^{-\beta_{1,i}-\tp{\beta}_{2,j}+i+j-1}\big)^{-1}\notag\\
\hphantom{Z_{\beta_1\beta_2} =}{}\times
 \big\langle\tp{\beta}_1\,|\,\Gamma_{-}\big(q^{-\rho}\big)\Gamma_{+}\big(q^{-\rho}\big)(-Q_1)^{L_0}
 \Gamma'_{-}\big(q^{-\rho}\big)\Gamma'_{+}\big(q^{-\rho}\big)(-Q_3)^{L_0}\notag\\
\hphantom{Z_{\beta_1\beta_2} =}{}
\times \Gamma_{-}\big(q^{-\rho}\big)\Gamma_{+}\big(q^{-\rho}\big)(-Q_2)^{L_0}
 \Gamma'_{-}\big(q^{-\rho})\Gamma'_{+}\big(q^{-\rho}\big)\,|\,\tp{\beta}_2\big\rangle. \label{Z-ctv}
\end{gather}

\begin{figure}[t]\centering
\includegraphics[scale=0.7]{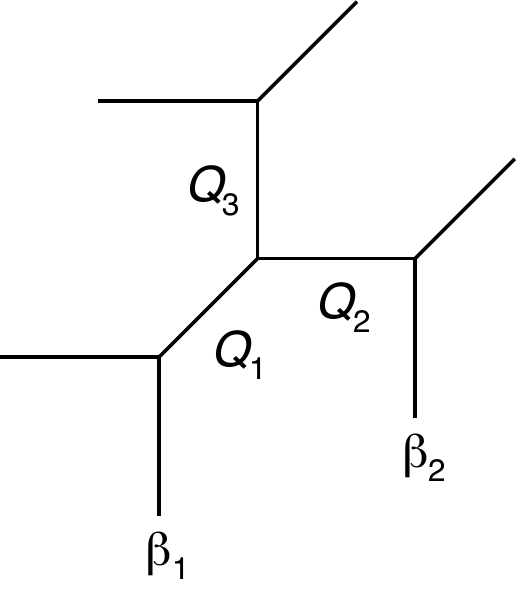}
\caption{Closed topological vertex with two stacks of branes.}\label{fig3}
\end{figure}

The main part $\langle\tp{\beta}_1|\cdots|\tp{\beta}_2\rangle$ of~(\ref{Z-ctv}) happens to be essentially the same as the open string amplitude of strip geometry shown in Fig.~\ref{fig4}. This part is corrected by the f\/irst two factors. In the previous work, we made use of this particular structure to derive $q$-dif\/ference equations for the generating functions{\samepage
\begin{gather}
 \Psi_1(x) = \sum_{k=0}^\infty x^kZ_{(1^k)\varnothing}/Z,\qquad \Psi_2(x) = \sum_{k=0}^\infty x^kZ_{\varnothing(1^k)}/Z,\qquad Z = Z_{\varnothing\varnothing}, \label{Psi-ctv}
\end{gather}
of normalized open string amplitudes.}

\begin{figure}[t]\centering
\includegraphics[scale=0.7]{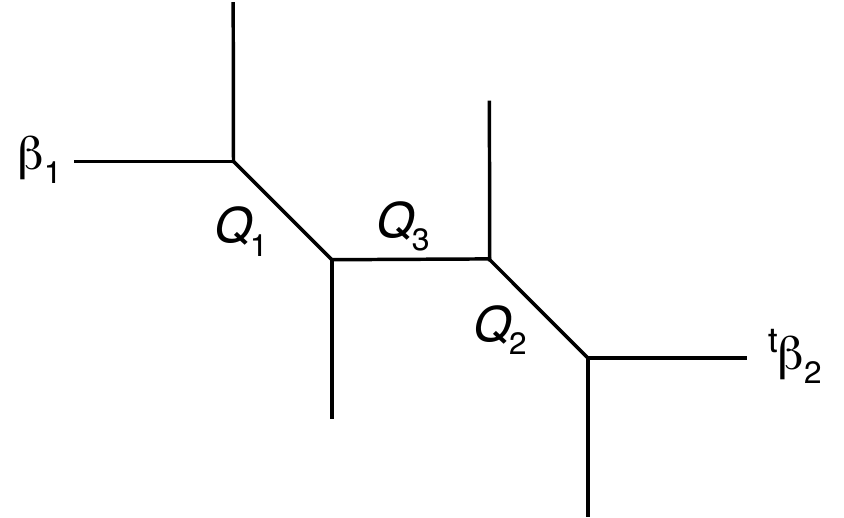}
\caption{New strip geometry emerging from $Z_{\beta_1\beta_2}$.}\label{fig4}
\end{figure}

\subsection{Generating functions as tau functions}

We can construct two multi-variate generating functions:
\begin{gather*}
 \tau_1(\bsx) = \sum_{\beta\in\calP}s_{\tp{\beta}}(\bsx)Z_{1,\beta}/Z,\qquad Z_{1,\beta} = Z_{\beta\varnothing},\\
 \tau_2(\bsx) = \sum_{\beta\in\calP}s_{\tp{\beta}}(\bsx)Z_{2,\beta}/Z,\qquad  Z_{2,\beta} = Z_{\varnothing\beta}.
\end{gather*}
As we show below, both $\tau_1(\bsx)$ and $\tau_2(\bsx)$ are tau functions of the KP hierarchy, and the asso\-ciated points of the Sato Grassmannian have admissible bases that are generated by $q$-dif\/ference operators. The reasoning is similar to the case the of generating functions of $Z_{0,\alpha_0}/Z$ and $Z_{N+1,\alpha_N}/Z$ in the last section.

Let us examine $Z_{1,\beta}/Z$ in detail. As explained in the case of strip geometry, one can eliminate the operators of the form $(-Q)^{L_0}$ and $\Gamma^\sigma_{-}$ from the unnormalized amplitude
\begin{gather*}
 Z_{1,\beta} = \prod_{i,j=1}^\infty\big(1 - Q_1Q_2q^{-\beta_i+i+j-1}\big)^{-1} \big\langle\tp{\beta}\,|\,\Gamma_{-}\big(q^{-\rho}\big)\Gamma_{+}\big(q^{-\rho}\big)(-Q_1)^{L_0}
 \Gamma'_{-}\big(q^{-\rho}\big)\\
\hphantom{Z_{1,\beta} =}{}\times \Gamma'_{+}\big(q^{-\rho}\big)(-Q_3)^{L_0}\Gamma_{-}\big(q^{-\rho}\big)\Gamma_{+}\big(q^{-\rho}\big)(-Q_2)^{L_0}
 \Gamma'_{-}\big(q^{-\rho}\big)\Gamma'_{+}\big(q^{-\rho}\big)\,|\,0\big\rangle
\end{gather*}
as
\begin{gather*}
 Z_{1,\beta} = (\text{constant independent of $\beta$}) \cdot \prod_{i,j=1}^\infty\big(1 - Q_1Q_2q^{-\beta_i+i+j-1}\big)^{-1}\\
\hphantom{Z_{1,\beta} =}{}\times \big\langle\tp{\beta}\,|\,\Gamma_{-}\big(q^{-\rho}\big)\Gamma'_{-}\big({-}Q_1q^{-\rho}\big)
 \Gamma_{-}\big(Q_1Q_3q^{-\rho}\big)\Gamma'_{-}\big({-}Q_1Q_2Q_3q^{-\rho}\big)\, |\,0\big\rangle,
\end{gather*}
The f\/irst constant factor disappears in the normalized amplitude. The contribution of the inf\/inite product can be computed by the formula (\ref{V(k)_0}). Thus the normalized amplitude can be expressed as
\begin{gather*}
 Z_{1,\beta}/Z = \big\langle\tp{\beta}\,|\,g_1\,|\,0\big\rangle,
\end{gather*}
where
\begin{gather}
 g_1 = \exp\left(- \sum_{k=1}^\infty\frac{(Q_1Q_2)^k}{k(1-q^k)}V^{(k)}_0 \right) \nonumber\\
 \hphantom{g_1 = }{}\times
 \Gamma_{-}\big(q^{-\rho}\big)\Gamma'_{-}\big({-}Q_1q^{-\rho}\big) \Gamma_{-}\big(Q_1Q_3q^{-\rho}\big)\Gamma'_{-}\big({-}Q_1Q_2Q_3q^{-\rho}\big). \label{g1-ctv}
\end{gather}
In much the same way, $Z_{2,\beta}/Z$ turns out to be expressed as
\begin{gather*}
 Z_{2,\beta}/Z = \big\langle 0\,|\,g_2\,|\,\tp{\beta}\big\rangle,
\end{gather*}
where
\begin{gather}
 g_2 = \Gamma_{+}\big({-}Q_1Q_2Q_3q^{-\rho}\big)\Gamma'_{+}\big(Q_2Q_3q^{-\rho}\big) \Gamma_{+}\big({-}Q_2q^{-\rho}\big)\Gamma'_{+}\big(q^{-\rho}\big)\notag\\
\hphantom{g_2 =}{}\times
 \exp\left(\sum_{k=1}^\infty\frac{(Q_1Q_2q)^k}{k(1-q^k)}V^{(-k)}_0 \right) q^{-K/2}. \label{g2-ctv}
\end{gather}

These expressions of the normalized amplitudes lead us to the following conclusion.

\begin{Theorem} $\tau_1(\bsx)$ and $\tau_2(\bsx)$ are tau functions of the KP hierarchy with the fermionic expression
\begin{gather*}
 \tau_1(\bsx) = \big\langle 0\,|\,\Gamma_{+}(\bsx)g_1\,|\,0\big\rangle, \qquad \tau_2(\bsx) = \big\langle 0\,|\,g_2\Gamma_{-}(\bsx)\,|\,0\big\rangle,
\end{gather*}
where $g_1$ and $g_2$ are the operators defined in~\eqref{g1-ctv} and~\eqref{g2-ctv}.
\end{Theorem}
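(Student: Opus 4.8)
The plan is to read off the two fermionic expressions directly from the reduced forms $Z_{1,\beta}/Z=\langle\tp{\beta}\,|\,g_1\,|\,0\rangle$ and $Z_{2,\beta}/Z=\langle 0\,|\,g_2\,|\,\tp{\beta}\rangle$ derived just above, and then to identify the results as KP tau functions via the general mechanism of Section~\ref{section2}. First I would substitute these into the defining sums. For $\tau_1$ I would use the fermionic realization $s_{\tp{\beta}}(\bsx)=\langle 0|\Gamma_{+}(\bsx)|\tp{\beta}\rangle$ of the Schur function to get
\[
 \tau_1(\bsx)=\sum_{\beta\in\calP}\langle 0|\Gamma_{+}(\bsx)|\tp{\beta}\rangle\langle\tp{\beta}|g_1|0\rangle ,
\]
and for $\tau_2$ the companion identity $s_{\tp{\beta}}(\bsx)=\langle\tp{\beta}|\Gamma_{-}(\bsx)|0\rangle$ to get
\[
 \tau_2(\bsx)=\sum_{\beta\in\calP}\langle 0|g_2|\tp{\beta}\rangle\langle\tp{\beta}|\Gamma_{-}(\bsx)|0\rangle .
\]

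The key step is to collapse these sums by a resolution of the identity. Since conjugation $\beta\mapsto\tp{\beta}$ is an involution of $\calP$, the states $\{|\tp{\beta}\rangle\}_{\beta\in\calP}$ exhaust the orthonormal basis $\{|\lambda\rangle\}_{\lambda\in\calP}$ of the charge-$0$ sector, so $\sum_{\beta}|\tp{\beta}\rangle\langle\tp{\beta}|$ is precisely the projector onto that sector. Because $\Gamma_{\pm}(\bsx)$, the vertex operators, the $m=0$ generators $V^{(k)}_0$, and $q^{-K/2}$ are all charge-preserving, the bras $\langle 0|\Gamma_{+}(\bsx)$ and $\langle 0|g_2$ and the kets $g_1|0\rangle$ and $\Gamma_{-}(\bsx)|0\rangle$ all sit in the charge-$0$ sector; the projector acts as the identity between them and collapses the two sums to $\langle 0|\Gamma_{+}(\bsx)g_1|0\rangle$ and $\langle 0|g_2\Gamma_{-}(\bsx)|0\rangle$, which are the asserted expressions.

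It then remains to recognize these as KP tau functions. The operators $g_1$ and $g_2$ of \eqref{g1-ctv} and \eqref{g2-ctv} are products of the vertex operators $\Gamma_{\pm}$, exponentials of the charge-$0$ bilinears $V^{(k)}_0$, and the diagonal operator $q^{-K/2}$, each of which is an exponential of fermion bilinears; hence $g_1,g_2\in GL(\infty)$. Under the Miwa transformation~\eqref{tx-rel}, $\Gamma_{+}(\bsx)$ becomes the time-evolution operator $\gamma_{+}(\bst)$, so $\tau_1(\bsx)$ is literally of the form~\eqref{tauh-gen} treated in Section~\ref{section2} and is therefore a KP tau function. The one genuinely delicate point is $\tau_2(\bsx)$, where the spectral operator $\Gamma_{-}(\bsx)$ sits on the ket side rather than the bra side; I would handle this either by applying the transpose anti-automorphism of the Clifford algebra, which exchanges $J_{k}$ and $J_{-k}$ and sends $g_2$ to another element of $GL(\infty)$ so as to restore the standard form $\langle 0|\gamma_{+}(\bst)g|0\rangle$, or simply by invoking the $\bst\leftrightarrow\tilde{\bst}$ symmetry of the Sato construction. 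This transpose/duality bookkeeping for $\tau_2$ is the main obstacle, but it is routine once the fermion-bilinear/matrix conventions of Section~\ref{section2} are fixed.
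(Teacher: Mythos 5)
Your proposal is correct and matches the paper's (largely implicit) argument: the paper derives $Z_{1,\beta}/Z=\langle\tp{\beta}\,|\,g_1\,|\,0\rangle$ and $Z_{2,\beta}/Z=\langle 0\,|\,g_2\,|\,\tp{\beta}\rangle$ and then states the theorem as an immediate consequence, which is exactly the substitution-plus-completeness step you spell out, and your transpose treatment of $\tau_2(\bsx)$ is precisely what the paper records in the Remark giving $\tau_2(\bsx)=\langle 0\,|\,\Gamma_{+}(\bsx)\tp{g_2}\,|\,0\rangle$. The only cosmetic quibble is that $\tau_1(\bsx)$ is of the general form $\langle 0\,|\,\gamma_{+}(\bst)g\,|\,0\rangle$ with $g\in GL(\infty)$ rather than literally of the form \eqref{tauh-gen}, but this does not affect the conclusion.
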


\begin{Remark}
The fermionic expression of $\tau_2(\bsx)$ can be rewritten as
\begin{gather*}
 \tau_2(\bsx) = \big\langle 0\,|\,\Gamma_{+}(\bsx)\tp{g_2}\,|\,0\big\rangle,
\end{gather*}
where
\begin{gather*}
 \tp{g_2} = q^{-K/2} \exp\left(\sum_{k=1}^\infty\frac{(Q_1Q_2q)^k}{k(1-q^k)}V^{(-k)}_0 \right) \notag\\
\hphantom{\tp{g_2} =}{}\times \Gamma'_{-}\big(q^{-\rho}\big)\Gamma_{-}\big({-}Q_2q^{-\rho}\big) \Gamma'_{-}\big(Q_2Q_3q^{-\rho}\big)\Gamma_{-}\big({-}Q_1Q_2Q_3q^{-\rho}\big).
\end{gather*}
\end{Remark}

\subsection{Admissible basis and Kac--Schwarz operators}

Since all building blocks of $\tau_1(\bsx)$ and $\tau_2(\bsx)$ are now familiar, one can derive the associated admissible bases and $q$-dif\/ference Kac--Schwarz operators in almost the same way as we have done in the preceding sections. Actually, the computations are partly more complicated, and exhibit a new aspect that is characteristic of this case. Let us consider $\tau_1(\bsx)$ to illustrate the computations.

The generating operator $G$ of an admissible basis $\{\Phi_j(x)\}_{j=0}^\infty$ for $\tau_1(\bsx)$
takes such a form as
\begin{gather*}
 G = \prod_{i=1}^\infty\big(1 - Q_1Q_2q^{i-1+D}\big)\cdot\Phi(x), \label{G-ctv*}
\end{gather*}
where $\Phi(x)$ is the composite dilogarithmic function
\begin{gather*}
 \Phi(x) = \prod_{i=1}^\infty \frac{(1 - Q_1q^{i-1/2}x)(1 - Q_1Q_2Q_3q^{i-1/2}x)} {(1 - q^{i-1/2}x)(1 - Q_1q_3q^{i-1/2}x)}.
\end{gather*}
The admissible basis is generated from $\Phi(x)$ as
\begin{gather*}
 \Phi_j(x) = \prod_{i=1}^\infty\big(1 - Q_1Q_2q^{i-1+D}\big)\big(x^{-j}\Phi(x)\big).
\end{gather*}

$\Phi_j(x)$'s are neither quantum dilogarithmic nor $q$-hypergeometric. Power series expansion of these functions take such a form as
\begin{gather}
 \Phi_j(x) = \prod_{i=1}^\infty\big(1 - Q_1Q_2q^{i-1-j}\big) \cdot\sum_{k=0}^\infty\frac{b_k}{(Q_1Q_2q^{-j};q)_k}x^{k-j}, \label{Phij-ctv}
\end{gather}
where $b_k$'s are coef\/f\/icients of the power series expansion
\begin{gather*}
 \Phi(x) = \sum_{k=0}^\infty b_kx^k
\end{gather*}
of $\Phi(x)$. Comparing the expansion of $\Phi_0(x)$ with the expansion of the generating function $\Psi_1(x)$ presented in our previous work \cite{TN15},
one can conf\/irm that $\Phi_0(x)$ agrees with $\Psi_1(x)$ except for the constant multiplier $\prod\limits_{i=1}^\infty(1 - Q_1Q_2q^{i-1})$.

In the previous work, we derived a $q$-dif\/ference equation for $\Psi_1(x)$ from the $q$-dif\/ference equation
\begin{gather}
 \Phi(qx) = \frac{(1 - q^{1/2}x)(1 - Q_1Q_3q^{1/2}x)} {(1 - Q_1q^{1/2}x)(1 - Q_1Q_2Q_3q^{1/2}x)}\Phi(x) \label{Phi-eq}
\end{gather}
for $\Phi(x)$ by a quite primitive method. The notion of Kac--Schwarz operators leads us to a more systematic derivation of the $q$-dif\/ference equation.

Let us proceed to the $q$-dif\/ference Kac--Schwarz operators. It is easy to compute $B = G\cdot x^{-1}\cdot G^{-1}$:
\begin{gather}
 B = \prod_{i=1}^\infty\big(1 - Q_1Q_2q^{i-1+D}\big)\cdot x^{-1} \cdot\prod_{i=1}^\infty\big(1 - Q_1Q_2q^{i-1+D}\big)^{-1} \notag\\
 = \prod_{i=1}^\infty\big(1 - Q_1Q_2q^{i-1+D}\big) \cdot\prod_{i=1}^\infty\big(1 - Q_1Q_2q^{i+D}\big)^{-1}\cdot x^{-1}
 = \big(1 - Q_1Q_2q^D\big)\cdot x^{-1}. \label{B-ctv}
\end{gather}
To compute $A = G\cdot q^{-D}\cdot G^{-1}$, one can start from the algebraic relation
\begin{gather*}
 \Phi(x)\cdot q^{-D}\cdot \Phi(x)^{-1} = q^{-D}\frac{(1 - q^{1/2}x)(1 - Q_1Q_3q^{1/2}x)} {(1 - Q_1q^{1/2}x)(1 - Q_1Q_2Q_3q^{1/2}x)},
\end{gather*}
which is the $q$-dif\/ference equation (\ref{Phi-eq}) in disguise. This relation yields the following intermediate expression of $A$:
\begin{gather*}
 A = q^{-D}\prod_{i=1}^\infty\big(1 - Q_1Q_2q^{i-1+D}\big)\cdot \frac{(1 - q^{1/2}x)(1 - Q_1Q_3q^{1/2}x)} {(1 - Q_1q^{1/2}x)(1 - Q_1Q_2Q_3q^{1/2}x)}\\
 \hphantom{A=}{}\times  \prod_{i=1}^\infty\big(1 - Q_1Q_2q^{i-1+D}\big)^{-1}.
\end{gather*}
Since (\ref{B-ctv}) implies that
\begin{gather*}
 \prod_{i=1}^\infty\big(1 - Q_1Q_2q^{i-1+D}\big)\cdot x \cdot\prod_{i=1}^\infty\big(1 - Q_1Q_2q^{i-1+D}\big)^{-1} = x\big(1 - Q_1Q_2q^D\big)^{-1},
\end{gather*}
one can rewrite the foregoing intermediate expression of $A$ as
\begin{gather}
 A = q^{-D} \frac{\left(1 - q^{1/2}x(1-Q_1Q_2q^D)^{-1}\right) \left(1 - Q_1Q_3q^{1/2}x(1-Q_1Q_2q^D)^{-1}\right)}
 {\left(1 - Q_1q^{1/2}x(1-Q_1Q_2q^D)^{-1}\right) \left(1 - Q_1Q_2Q_3q^{1/2}x(1-Q_1Q_2q^D)^{-1}\right)}. \label{A-ctv}
\end{gather}

Let us summarize these computations. Although we omit details, similar results hold for~$\tau_2(\bsx)$ as well.

\begin{Theorem} \eqref{B-ctv} and \eqref{A-ctv} are $q$-difference Kac--Schwarz operators for~$\tau_1(\bsx)$. \eqref{Phij-ctv} give an admissible basis
of the associated point of the Sato Grassmannian.
\end{Theorem}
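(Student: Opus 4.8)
The plan is to verify the theorem directly from the definitions in (\ref{ABG-rel}) and (\ref{PhijG-rel}), exploiting the factorization $G = H\cdot\Phi(x)$ in which $H = \prod_{i=1}^\infty(1 - Q_1Q_2q^{i-1+D})$ is a function of $D$ alone and $\Phi(x)$ is a multiplication operator. This factorization, read off from $g_1$ in (\ref{g1-ctv}) through the translation rules of Section~\ref{section2}, is what reduces each assertion to a short conjugation, and I would take it as the starting point.

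First I would establish $B$. Since $\Phi(x)$ and $x^{-1}$ are both multiplication operators they commute, so $B = Hx^{-1}H^{-1}$, and the identity (\ref{x-conj}) moves $x^{-1}$ to the right, sending $H=\prod_i(1 - Q_1Q_2q^{i-1+D})$ to $\prod_i(1 - Q_1Q_2q^{i+D})$; the ratio of these two infinite products telescopes to the single factor $1 - Q_1Q_2q^D$, which is (\ref{B-ctv}). Running the same telescoping in the opposite direction records the auxiliary identity $H\cdot x\cdot H^{-1} = x(1 - Q_1Q_2q^D)^{-1}$, which is needed below.

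For $A$ the key input is $\Phi(x)q^{-D}\Phi(x)^{-1} = q^{-D}\,\Phi(qx)\Phi(x)^{-1}$ via (\ref{q^D-conj}), where $\Phi(qx)\Phi(x)^{-1}$ is precisely the rational multiplier of the $q$-difference equation (\ref{Phi-eq}). Because $H$ commutes with $q^{-D}$, computing $A$ reduces to conjugating that multiplier by $H$; using the auxiliary identity together with the fact that conjugation by $H$ is an algebra homomorphism, every occurrence of $x$ in the multiplier is replaced by $x(1 - Q_1Q_2q^D)^{-1}$, yielding (\ref{A-ctv}). Finally, for the basis I would apply $G$ to $x^{-j}$: writing $\Phi(x) = \sum_k b_k x^k$, the multiplication part gives $\sum_k b_k x^{k-j}$, and $H$ acts on $x^{k-j}$ by the eigenvalue $\prod_i(1 - Q_1Q_2q^{i-1+k-j})$; pulling out the $k$-independent prefactor $\prod_i(1 - Q_1Q_2q^{i-1-j})$ and recognizing the remaining finite product as $1/(Q_1Q_2q^{-j};q)_k$ produces (\ref{Phij-ctv}). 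Admissibility then follows because the lowest power is $x^{-j}$ with leading coefficient $b_0\prod_i(1 - Q_1Q_2q^{i-1-j})$, and $b_0 = \Phi(0) = 1$ makes this nonzero for generic K\"ahler parameters.

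The only step beyond bookkeeping is the conjugation of the rational multiplier by $H$ in the computation of $A$. After the substitution $x\mapsto x(1 - Q_1Q_2q^D)^{-1}$ the four linear factors no longer commute individually with $q^{-D}$, so I must justify the substitution as a genuine homomorphism rather than a naive replacement of a commuting variable: each substituted factor is literally $H(\,\cdot\,)H^{-1}$ of a factor in $x$, hence the four still commute among themselves and the expression (\ref{A-ctv}) is unambiguous. Everything else reduces to convergence of the infinite products $\prod_i(1 - Q_1Q_2q^{i-1+m})$ for $|q|<1$ and to routine telescoping of $q$-Pochhammer symbols.
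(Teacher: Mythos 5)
Your proposal is correct and follows essentially the same route as the paper: the factorization $G = H\cdot\Phi(x)$, the computation of $B$ and of $H\cdot x\cdot H^{-1}$ via (\ref{x-conj}), the use of $\Phi(x)q^{-D}\Phi(x)^{-1}=q^{-D}\Phi(qx)\Phi(x)^{-1}$ (i.e., (\ref{Phi-eq}) in disguise) followed by conjugation with $H$ to get (\ref{A-ctv}), and the diagonal action of $H$ on powers of $x$ to get (\ref{Phij-ctv}). Your added remarks on why the substitution $x\mapsto x(1-Q_1Q_2q^D)^{-1}$ is a genuine conjugation and on the nonvanishing of the leading coefficients are sound refinements of points the paper leaves implicit.
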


\subsection[Remarks on $q$-dif\/ference equations]{Remarks on $\boldsymbol{q}$-dif\/ference equations}

Let us rewrite the equation $A\Phi_0(x) = \Phi_0(x)$ for $\Phi_0(x)$ as
\begin{gather}
 \big(1 - q^{1/2}x\big(1-Q_1Q_2q^D\big)^{-1}\big) \big(1 - Q_1Q_3q^{1/2}x\big(1-Q_1Q_2q^D\big)^{-1}\big)\Phi_0(x) \notag\\
\qquad {}= \big(1 - Q_1q^{1/2}x\big(1-Q_1Q_2q^D\big)^{-1}\big)
 \big(1 - Q_1Q_2Q_3q^{1/2}x\big(1-Q_1Q_2q^D\big)^{-1}\big)q^D\Phi_0(x).\!\!\!  \label{Phi0-eq-ctv}
\end{gather}
Although this equation looks considerably dif\/ferent from the $q$-dif\/ference equation presented in our previous work \cite{TN15}, they are equivalent
as we show below.

\begin{Lemma} For any constants $P_1$, $P_2$ and $Q$,
\begin{gather*}
 \big(1 - P_1x\big(1-Qq^D\big)^{-1}\big) \big(1 - P_2x\big(1-Qq^D\big)^{-1}\big) \notag\\
\qquad{}= \big(1 - Qq^{D-1}\big)^{-1}\big(1 - Qq^{D-2}\big)^{-1} \big(1 - Qq^{D-2} - P_1x\big)\big(1 - Qq^{D-1} - P_2x\big).
\end{gather*}
\end{Lemma}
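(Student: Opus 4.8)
The plan is to read the claimed equality as an identity of $q$-difference operators and to prove it by \emph{normal ordering}: I move every multiplication operator $x$ to the right of all functions of $q^D$ on each side, and then compare the coefficients of $x^0$, $x^1$ and $x^2$. The only non-commutativity in play is between $x$ and functions of $q^D$, and it is governed entirely by \eqref{x-conj}, which I use in the form $x\,f\big(q^D\big) = f\big(q^{D-1}\big)\,x$; concretely this gives the two moves $x\big(1-Qq^D\big)^{-1} = \big(1-Qq^{D-1}\big)^{-1}x$ and $x\big(1-Qq^{D-1}\big) = \big(1-Qq^{D-2}\big)x$ that I will use repeatedly.

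First I would normal-order the left-hand side. Moving the inner $x$ to the right in each factor turns $1 - P_i x\big(1-Qq^D\big)^{-1}$ into $1 - P_i\big(1-Qq^{D-1}\big)^{-1}x$; multiplying the two factors and pushing the surviving interior $x$ past $\big(1-Qq^{D-1}\big)^{-1}$ (which shifts it to $\big(1-Qq^{D-2}\big)^{-1}$) yields
\[
 1 - (P_1+P_2)\big(1-Qq^{D-1}\big)^{-1}x + P_1P_2\big(1-Qq^{D-1}\big)^{-1}\big(1-Qq^{D-2}\big)^{-1}x^2 .
\]

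For the right-hand side I would first expand the product $\big(1-Qq^{D-2}-P_1x\big)\big(1-Qq^{D-1}-P_2x\big)$. Abbreviating $a = 1-Qq^{D-2}$ and $b = 1-Qq^{D-1}$ and using $x\,b = a\,x$, the cross terms combine and this product collapses to $ab - (P_1+P_2)\,a\,x + P_1P_2\,x^2$. Left-multiplying by the prefactor $b^{-1}a^{-1}$ and simplifying with $b^{-1}a^{-1}ab = 1$ and $b^{-1}a^{-1}a = b^{-1}$ reproduces exactly the three terms displayed above, so the two sides coincide.

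The computation is mechanical once the commutation rule is in hand, and I do not expect a genuine obstacle; the single point that needs care --- and the only real content --- is that $\big(1-Qq^D\big)^{-1}$ does \emph{not} commute with $x$, so that naively clearing denominators is illegitimate and every passage of $x$ must be accompanied by the shift $q^D \mapsto q^{D-1}$. Once this bookkeeping is done correctly, the final equality rests solely on the trivial fact that $\big(1-Qq^{D-1}\big)^{-1}$ and $\big(1-Qq^{D-2}\big)^{-1}$, being functions of $D$ alone, commute, so their order in the $x^2$-coefficient is immaterial.
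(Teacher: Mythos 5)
Your proof is correct and rests on exactly the same single ingredient as the paper's, namely the identity (\ref{x-conj}) in the form $x\,f\big(q^D\big)=f\big(q^{D-1}\big)\,x$; the only difference is organizational, in that you normal-order both sides and compare the coefficients of $x^0$, $x^1$, $x^2$, whereas the paper rewrites the left-hand side into the right-hand side by a direct chain of factorizations and commutations. All three coefficient comparisons check out, so nothing needs to be added.
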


\begin{proof}
Do straightforward computations using (\ref{x-conj}) as follows:
\begin{gather*}
 \big(1 - P_1x\big(1-Qq^D\big)^{-1}\big) \big(1 - P_2x\big(1-Qq^D\big)^{-1}\big) \\
\qquad{} = \big(1 - P_1\big(1-Qq^{D-1}\big)^{-1}x\big) \big(1 - P_2\big(1-Qq^{D-1}\big)^{-1}x\big)\\
 \qquad{} = \big(1 - Qq^{D-1}\big)^{-1}\big(1 - Qq^{D-1} - P_1x\big) \big(1 - Qq^{D-1}\big)^{-1}\big(1 - Qq^{D-1} - P_2x\big)\\
\qquad{} = \big(1 - Qq^{D-1}\big)^{-1}\big(1 - P_1x\big(1-Qq^{D-1}\big)^{-1}\big) \big(1 - Qq^{D-1} - P_2x\big)\\
 \qquad{} = \big(1 - Qq^{D-1}\big)^{-1}\big(1 - P_1\big(1-Qq^{D-2}x\big)^{-1}x\big) \big(1 - Qq^{D-1} - P_2x\big)\\
\qquad{} = \big(1 - Qq^{D-1}\big)^{-1}\big(1-Qq^{D-2}x\big)^{-1} \big(1 - Qq^{D-2} - P_1x\big) \big(1 - Qq^{D-1} - P_2x\big).\tag*{\qed}
\end{gather*}\renewcommand{\qed}{}
\end{proof}

One can use this lemma to rewrite both sides of (\ref{Phi0-eq-ctv}) and collect terms to the left side. The outcome reads
\begin{gather}
 \big(1 - Q_1Q_2q^{D-1}\big)^{-1}\big(1 - Q_1Q_2q^{D-2}\big)^{-1}\hat{H}\Phi_0(x) = 0, \label{Phi0-eq2-ctv}
\end{gather}
where
\begin{gather*}
 \hat{H}= \big(1 - Q_1Q_2q^{D-2} - q^{1/2}x\big)\big(1 - Q_1Q_2q^{D-1} - Q_1Q_3q^{1/2}x\big) \notag\\
\hphantom{\hat{H}=}{} - \big(1 - Q_1Q_2q^{D-2} - Q_1q^{1/2}x\big)\big(1 - Q_1Q_2q^{D-1} - Q_1Q_2Q_3q^{1/2}x\big)q^D.
\end{gather*}

We are thus substantially in the same situation as encountered in the previous work. Namely, as far as $Q_1Q_2$ is not equal to integral powers of~$q$ (and this condition is assumed implicitly in the foregoing consideration), the two operators in front of~$\hat{H}$ in~(\ref{Phi0-eq2-ctv}) are invertible on $V = \CC((x))$. Therefore one can remove these operators and obtain the equation
\begin{gather}
 \hat{H}\Phi_0(x) = 0,  \label{Phi0-eq3-ctv}
\end{gather}
which is one of the equivalent forms of the $q$-dif\/ference equations derived in the previous work. Moreover, as pointed out therein, $\hat{H}$ itself can be factorized as
\begin{gather*}
 \hat{H} = \big(1 - Q_1Q_2q^{D-2}\big)\hat{K},
\end{gather*}
where $\hat{K}$ is another $q$-dif\/ference operator. Therefore (\ref{Phi0-eq3-ctv}) can be further reduced to
\begin{gather}
 \hat{K}\Phi_0(x) = 0. \label{Phi0-eq4-ctv}
\end{gather}

It is this equation (\ref{Phi0-eq4-ctv}) that is proposed in our previous work as a candidate of the quantum mirror curve of close topological vertex. Actually, the foregoing computations show that~$\hat{K}$ is related to the operators on both sides of~(\ref{Phi0-eq-ctv}) as
\begin{gather*}
 \big(1 - q^{1/2}x\big(1-Q_1Q_2q^D\big)^{-1}\big) \big(1 - Q_1Q_3q^{1/2}x\big(1-Q_1Q_2q^D\big)^{-1}\big) \\
\qquad\quad{} - \big(1 - Q_1q^{1/2}x\big(1-Q_1Q_2q^D\big)^{-1}\big) \big(1 - Q_1Q_2Q_3q^{1/2}x\big(1-Q_1Q_2q^D\big)^{-1}\big)q^D \\
 \qquad{}= \big(1 -Q_1Q_2q^{D-1}\big)^{-1}\hat{K}.
\end{gather*}

\subsection*{Acknowledgements}

The authors are grateful to Motohico Mulase for discussion and encouragement. We owe him the idea that an integrable hierarchy may be thought of as a mirror map. This work is partly supported by JSPS Kakenhi Grant No.~25400111 and No.~15K04912.

\pdfbookmark[1]{References}{ref}
\LastPageEnding

\end{document}